\date{}
\newtheorem{theorem}{Theorem}%
\newtheorem{lemma}{Lemma}%
\newtheorem{proposition}{Proposition}%
\newtheorem{corollary}{Corollary}%
\newtheorem{example}{Example}%
\newcommand{\exref}[1]{Example~\ref{#1}}
\newcommand{\ie}{i.e.,\xspace}
\newcommand{\eg}{e.g.,\xspace}
\newcommand{\midd}{\mathrel{:}}
\newcommand{\sepp}{\mathrel{|}}
\renewcommand{\ie}{i.e., }
\newcommand{\set}[1]{\{#1\}}
\newcommand{\vars}{\ensuremath{\mathit{V}}}
\newlength{\wordlength}
\newcommand{\wordbox}[3][c]{\settowidth{\wordlength}{#3}\makebox[\wordlength][#1]{#2}}
\newcommand{\mathwordbox}[3][c]{\settowidth{\wordlength}{$#3$}\makebox[\wordlength][#1]{$#2$}}
\newcommand{\TRANS}{\ensuremath{\mathbf{P}}}
\newcommand{\sep}{\mathwordbox{|}{x}}
\newcommand{\trans}{\mathit{trans}}
\newcommand{\hatsub}{_{i\vec\jmath\leftarrow \vec b}}
\newcommand{\hatsubb}{_{\vec{\imath\jmath}\leftarrow\vec b}}
\newcommand{\oone}{\overline{1}}
\newcommand{\otwo}{\overline{2}}
\newcommand{\othree}{\overline{3}}
\newcommand{\ofour}{\overline{4}}
\newcommand{\bphnote}[1]{
	\begin{trivlist}
		\item{\color{cyan!40!black}\textbf{BPH:}\hspace{1em}#1}
	\end{trivlist}
}
\newcommand{\jlnote}[1]{
	\begin{trivlist}
		\item{\color{red!80!black}\textbf{JL:}\hspace{1em}#1}
	\end{trivlist}
}
\newcommand{\hanote}[1]{
	\begin{trivlist}
		\item{\color{blue!50!black}\textbf{HA:}\hspace{1em}#1}
	\end{trivlist}
}
\newcommand{\mjwnote}[1]{
	\begin{trivlist}
		\item{\color{cyan}\textbf{MW:}\hspace{1em}#1}
	\end{trivlist}
}
\newcommand{\nocomments}{
	\renewcommand{\hanote}[1]{}
	\renewcommand{\jlnote}[1]{}
	\renewcommand{\bphnote}[1]{}
	\renewcommand{\mjwnote}[1]{}
}
\newcommand{\qed}{\unskip\hspace{1em}\hspace*{\fill}\ensuremath{\Box}}
	\newenvironment{proof}[1][Proof]{\begin{trivlist}%
		\item[\hskip \labelsep {\it #1:}]}{%
   	\qed\end{trivlist}}%
\title{Boolean Hedonic Games\thanks{This paper was presented at the Eleventh Conference on Logic and the Foundations of Game and Decision Theory (LOFT 2014) in Bergen, Norway, July 27-30, 2014.}}
\author[1]{Haris Aziz}
\author[2]{Paul Harrenstein}
\author[3]{J{\'e}r{\^o}me Lang}
\author[2]{Michael Wooldridge}
\affil[1]{\small NICTA and University of New South Wales, Australia}
\affil[2]{\small Department of Computer Science, University of Oxford, UK}
\affil[3]{\small LAMSADE, Universit{\'e} Paris-Dauphine, France}
\begin{document}
	\nocomments
	

	
\sloppy

%

%
%
%

\maketitle

\begin{abstract}
    \noindent We study \emph{hedonic games with dichotomous
      preferences}. Hedonic games are cooperative games in which
    players desire to form coalitions, but only care about the makeup
    of the coalitions of which they are members; they are indifferent
    about the makeup of other coalitions. The assumption of
    dichotomous preferences means that, additionally, each player's
    preference relation partitions the set of coalitions of which that
    player is a member into just two equivalence classes: satisfactory
    and unsatisfactory. A player is indifferent between satisfactory
    coalitions, and is indifferent between unsatisfactory coalitions,
    but strictly prefers any satisfactory coalition over any
    unsatisfactory coalition. We develop a succinct representation for
    such games, in which each player's preference relation is
    represented by a propositional formula.  We show how solution
    concepts for hedonic games with dichotomous preferences are
    characterised by propositional formulas.
\end{abstract}

%
%
%

		\section{Introduction}\label{intro}
		
	\hanote{We use set of players, group of players and coalition of players. Are they terms used systematically for different contexts?}
	\bphnote{My suggestion: S is  a coalition of players if a partition~$\pi$ is given and $S\in\pi$. A group of players I would say is a subset of players that need not be a coalition in a given partition. ``Set of players'' I would only use once, when introducing the framework: ``$N$ is a set of players''.}

	Hedonic games are cooperative games in which players desire to form
	coalitions, but only care about the makeup of the coalitions of which
	they are members; they are indifferent about the makeup of other
	coalitions~\citep{dreze:80a,chalkiadakis:2011a}.  Because the specification of a
	hedonic game requires the expression of each player's ranking over all
	sets of players including him, in general, such a specification
	requires exponential space -- and, when used by a centralised
	mechanism, exponential elicitation time. Such an exponential blow-up
	severely limits the practical applicability of hedonic games, and for
	this reason researchers have investigated compactly represented
	hedonic games. One approach to this problem has been to consider
	possible restrictions on the possible preferences that players
	have. For example, one may assume that each player specifies only a
	ranking over single players, and that her preferences over coalitions
	are defined according to the identity of the best (respectively,
	worst) element of the coalition~\citep{CeHa04a,Cech08a}. One may also
	assume that each player's preferences depend only on the number of
	players in her coalition~\citep{bogomolnaia:2002}. These representations come
	with a domain restriction, i.e., a loss of expressivity:
	\citet{ElWo09a} consider a fully expressive representation for hedonic
	games, based on weighted logical formulas.
	In the worst case, the representation
	of~\citeauthor{ElWo09a} 
	requires space exponential in the number of players,
	but in many cases the space requirement is much smaller.

	In this paper, we consider another natural restriction on player
	preferences. We consider hedonic games with \emph{dichotomous
	  preferences}. The assumption of dichotomous preferences means that
	each player's preference relation partitions the set of coalitions of
	which that player is a member into just two equivalence classes:
	satisfactory and unsatisfactory. A player is indifferent between
	satisfactory coalitions, and is indifferent between unsatisfactory
	coalitions, but strictly prefers any satisfactory coalition over any
	unsatisfactory coalition.
	
	While to the best of our knowledge dichotomous preferences have not
	been previously studied in the context of hedonic games, they have of
	course been studied in other economic settings, such as by
	\citet{BMS05a}, \citet{BoMo04a}, and \citet{Bouveret08Jair} in the
	context of fair division, by ~\citet{HHMW01a} in the context of Boolean
	games, by \citet{KP02} in the context of belief merging, by
	\citet{BoMo04a} in the context of matching, and by \citet{BrFi07c} (and
	many others) in the context of approval voting.

	When the space of all possible alternatives has a combinatorial
	structure, propositional formulas are a very natural representation of
	dichotomous preferences. In such a representation, variables
	correspond to goods (in fair division), outcome variables (Boolean
	games), state variables (belief merging), or players (coalition
	formation).  In the latter case, which we will be concerned with in
	the present paper, each player~$i$ can express her preferences over
	coalitions containing her by using propositional atoms of the form~$ij$ ($j \neq i$), meaning that~$j$ is in the same coalition as
	$i$. Thus, for example, player~1 can express by the formula $(12 \vee
	13) \wedge \neg 14$ that he wants to be in a coalition with player~$2$
	or with player~$3$, but not with player~$4$.
	Our primary aim in this paper is to present such a propositional
	framework for specifying hedonic games and computing various solution
	concepts. We will first define a propositional logic using atoms of
	the form $ij$, together with domain axioms expressing that the output
	of the game should be a partition of the set of players. Then we 
	consider a range of solution concepts, and show that they can be
	characterised by some specific classes of (sometimes polysize)
	formulas, and solved using propositional satisfiability solvers.  The
	result is a simple, natural, and compact representation scheme for
	expressing preferences, and a machinery based on satisfiability for
	computing partitions satisfying some specific stability criteria such
	as Nash stability or core stability.

	\hanote{Need to clearly show why the new work is different from previous logical approaches to coalition formation. Is it that the framework is old but only characterisations are new?}

	\section{Preliminaries}

	In this section, we  recall some definitions relating to
	coalitions, coalition structures (or partitions), 
	and hedonic games. See, e.g.,~\cite{chalkiadakis:2011a} for an in-depth
	discussion of these and related concepts.

	\paragraph{Coalitions and Partitions}

	\bphnote{Standard notation in the context of hedonic games is: $S,T,\dots$ for coalitions,~$\mathscr N_i$ for the set of coalitions~$i$ belongs to. Let's stick to that, even though~$\mathscr N_i$ is rather wild and nevertheless used quite a lot in what follows.}

	We consider a setting in which there is a set~$N$ of~$n$ players with
	typical elements~$i,j,k,\dots$. Players can form \emph{coalitions},
	which we will denote by~$S,T,\dots$. A coalition is simply a subset of
	the players~$N$. One may usefully think of the players as getting together to form
	teams that will work together.  A \emph{coalition structure} is an
	exhaustive partition~$\pi=\set{S_1,\dots,S_m}$ of the players into
	disjoint coalitions, \ie $S_1\cup\dots\cup S_m=N$ and $S_i\cap
	S_j=\emptyset$ for all~$S_i,S_j\in\pi$ such that~$i \neq j$.  {For
	  technical convenience, we slightly deviate from standard conventions
	  and require that every coalition structure~$\pi$ contains the empty
	  set~$\emptyset$.}
	We commonly refer to coalition structures simply as \emph{partitions}.
	In examples, we also write, \eg
	$[\mathwordbox{12}{mi}|\mathwordbox{34}{mi}|\mathwordbox{5}{m}]$
	rather than the more cumbersome
	$\set{\set{1,2},\set{3,4},\set{5},\emptyset}$. 
	For each player~$i$ in~$N$,
	we let $\mathscr N_i=\set{S\subseteq N\midd i\in S}$ denote the set of
	coalitions over~$N$ that contain~$i$.  If $\pi=\set{S_1,\dots,S_m}$ is
	a partition, then~$\pi(i)$ refers to the coalition in~$\pi$ that
	player~$i$ is a member of.

	The notion of players leaving their own coalition and joining another
	lies at the basis of many of the solution concepts that we will come
	to consider. We introduce some notation to represent such situations.
	For~$T$ a group of players (not necessarily a coalition in~$\pi$), by~$\pi|_T$ we
	refer to the partition $\set{S_1\cap T,\dots,S_m\cap T}$ and we
	write~$\pi|_{-T}$ for~$\pi|_{N\setminus T}$. Moreover, for~$S$ a
	coalition in partition~$\pi|_{-T}$, we use $\pi[T \to S]$ to refer to
	the partition that results if the players in~$T$ leave their
	respective coalitions in~$\pi$ and join coalition~$S$. We also
	allow~$T$ to form a coalition of its own, in which case we write $\pi[T\to\emptyset]$.
	Formally, we have, for~$S\in\pi|_{-T}$, 
	\begin{align*}
		\pi[T\to S]	& \mathwordbox[c]{=}{==} \set{S_j\in\pi|_{-T}\midd S_j\neq S}\cup\set{S\cup T,\emptyset}\text.
	\end{align*}
	\bphnote{In a later version we might also take into account \emph{simultaneous} deviations of coalitions. This would require slightly more complicated definitions, which are undesirable here.}
	If~$T$ is a singleton~$\set i$ we also write~$\pi|_{-i}$ and $\pi[i\to S]$ instead of~$\pi|_{-\set i}$ and $\pi[\set i\to S]$, respectively.
	Thus, \eg $S\cup\set i\in\pi[i\to S]$ and $\pi[i\to\pi(i)\setminus\set i]=\pi$. 

	Finally, define $\pi[i\rightleftarrows j]$ as the partition where $i$ and $j$ exchange their places, i.e.:
	\[
	\scalebox{1}[1]{$
	\pi[i\rightleftarrows j] = (\pi\setminus\set{\pi(i),\pi(j)})\cup\set{(\pi(i)\setminus\set i)\cup\set j,(\pi(j)\setminus\set j)\cup\set i}\text.
	$}
	\]
	Thus, for partition $\pi = [123|45]$, we have $\pi(1) = \pi(2)=\{1,2,3\}$ and $\pi(4)=\set{4,5}$. Furthermore, $\pi|_{\set{1,2,4,5}}=[12|45]$ and $\pi|_{-\set{3,4}}=[12|5]$. Also, $\pi[1\to \{4,5\}] = [23|145]$, $\pi[1\to \emptyset] = [1|23|45]$, and $\pi[3\rightleftarrows 4] = [124|35]$.

	%

	\noindent\paragraph{Hedonic games}
	Hedonic games are the class of coalition formation games in which each
	player is only interested in the coalition he is a member of, and is
	indifferent as to how the players outside his own coalition are
	grouped. Hedonic games were originally introduced by \cite{dreze:80a} and further developed by, e.g., \cite{bogomolnaia:2002}. Also see \cite{hajdukova:2006} for a survey from a more computational point of view. Formally, a \emph{hedonic game} is a
	tuple~$(N,R_1,\dots,R_n)$, where~$R_i$ represents~$i$'s transitive,
	reflexive, and complete preferences over the set of all coalitions
	$\mathscr N_i$ containing $i$. Thus, $S \mathrel{R_i}T$ intuitively
	signifies that player~$i$ considers coalition~$S$ at least as
	desirable as coalition~$T$, where~$S$ and~$T$ are coalitions
	in~$\mathscr N_i$. By~$P_i$ and~$I_i$ we denote the strict and the
	indifferent part of~$R_i$, respectively. The preferences~$R_i$ of a
	player~$i$ are said to be \emph{dichotomous} whenever $\mathscr N_i$
	can be partitioned into two disjoint sets~$\mathscr N_i^+$
	and~$\mathscr N_i^-$ such that~$i$ strictly prefers all coalitions
	in~$\mathscr N_i^+$ to those in~$\mathscr N_i^-$ and is indifferent
	otherwise, \ie $S\mathrel{P_i}T$ if and only if $S\in \mathscr N_i^+$
	and $T\in \mathscr N_i^-$. 	A coalition~$S$ in~$\mathscr N_i$ is \emph{acceptable} to~$i$ if~$i$ (weakly) prefers~$S$ to coalition~$\{i\}$, where he is on his own, \ie if~$S\mathrel{R_i}\set i$. By contrast, we say that a coalition~$S$ is \emph{satisfactory} or \emph{desirable} for~$i$ if $S\in\mathscr N_i^+$. Satisfactory partitions are thus generally acceptable to all players. The implication in the other direction, however, does not hold.

	We lift preferences on coalitions to preferences on partitions in a natural way: player~$i$ prefers partition~$\pi$ to partition~$\pi'$ whenever~$i$ prefers coalition~$\pi(i)$ to coalition~$\pi'(i)$. We also  extend the concepts of acceptability and desirability of coalitions  to partitions.

	\begin{example}\label{ex1}
		Consider the following Boolean game with four players, $1$, $2$, $3$, and $4$, whose (dichotomous) preferences are as follows. (Indifferences are indicated by commas.)
		\begin{align*}
			1\colon	& \scalebox{1}[1]{$\set{1,2,3},\set{1,2,4},\set{1,3,4},\set{1,2,3,4}$}	\mathrel P_1 \scalebox{1}[1]{$\set{1},\set{1,2},\set{1,3},\set{1,4}$}\\
			2\colon	& \scalebox{1}[1]{$\set{2,1,3},\set{2,1,4},\set{2,3,4}$}					\mathrel P_2 \scalebox{1}[1]{$\set{2},\set{2,1},\set{2,3},\set{2,4},\set{2,1,3,4}$}\\
			3\colon	& \scalebox{1}[1]{$\set{3,1},\set{3,2},\set{3,1,2}$}						\mathrel P_3 \scalebox{1}[1]{$\set{3},\set{3,4},\set{3,1,4},\set{3,2,4},\set{3,1,2,4}$}\\
			4\colon	& \scalebox{1}[1]{$\set{4,1},\set{4,2},\set{4,3},\set{4,1,2},\set{4,1,3},\set 4$}					\mathrel P_4 \scalebox{1}[1]{$\set{4,2,3},\set{4,1,2,3}$}
		\end{align*}
		Thus, player~$1$ wants to be in a coalition of at least three and player~$2$ wishes to be in a coalition of exactly three. 
		Moreover, player~$3$ wants to be in the same coalition as player~$1$ or as~$2$. He does not want to be in a coalition with player~$4$.
		Finally, player~$4$ does not want to be with players~$2$ and~$3$ together. There is exactly one partition that is satisfactory for all four
		players, namely $[123\sep 4]$. For players~$1$,~$2$, and~$3$, all coalitions are acceptable. For player~$4$, however, $\set{4,2,3}$ and $\set{1,2,3,4}$ are unacceptable.
	\end{example}

	\paragraph{Solution Concepts for Hedonic Games}

	A \emph{solution concept} associates with every hedonic game~$(N,R_1,\dots,R_n)$ a (possibly empty) set of partitions of~$N$. Here we review some of the most common solution concepts for hedonic games.

	\label{page:stability_concepts}
	\begin{itemize}[label={},leftmargin=0em,itemsep=0ex]
	\item Individual rationality captures the idea that every player prefers the coalition he is in to being on his own, i.e., that coalitions are acceptable to its members. Thus, formally,~$\pi$ is \emph{individually rational} if, for all players~$i$ in~$N$,
			\[ 
				\text{$\pi(i)\mathrel{R_i}\{i\}$.}
			\]
	This condition is obviously equivalent to $\pi\mathrel{R_i}\pi[i\to\emptyset]$.
	\item For dichotomous hedonic games, a partition~$\pi$ is said to be \emph{social welfare optimal} if it maximises the number of players who are in a satisfactory coalition, that is, if~$\pi$ maximises $|\set{i\in N\midd \pi(i)\in\mathscr N^+_i}|$. In a similar way, a partition~$\pi$ is \emph{Pareto optimal} if it maximises the set of players being in a satisfactory coalition with respect to set-inclusion, that is, if there is no partition~$\pi'$ with
	\[
		\set{i\in N\midd \pi(i)\in\mathscr N^+_i}\subsetneq
		\set{i\in N\midd \pi'(i)\in\mathscr N^+_i}\text.
	\]
	 In the extreme case in which every player is in a most preferred coalition,~$\pi$ is said to be \emph{perfect} \citep[cf.,~][]{ABH11c}.
		A perfect partition satisfies any other of our stability concepts.
	
	\item A partition is \emph{Nash stable} if no player would like to
	  unilaterally abandon the coalition he is in and join any other
	  existing coalition or stay on his own, that is, if, for all~$i\in N$ and all $S\in\pi$,
			\[
				\text{$\pi(i)\mathrel{R_i}S \cup \{i\}$.
			}
			\]
	Observe that this condition is equivalent to $\pi\mathrel{R_i}\pi[i\to S]$.		
	\item Core stability concepts consider group deviations instead of individual ones. A group of players, possibly from different coalitions, is said to block a partition if they would all benefit by joining together in a separate coalition. Formally,~$T$ \emph{blocks} (or \emph{is blocking}) partition~$\pi$ if, for all~$i\in T$,
	\[
		T \mathrel{P_i} \pi(i)\text.
	\]
	Thus,~$T$ blocks~$\pi$ if and only if $\pi[T\to\emptyset]\mathrel{P_i}\pi$ for all $i\in T$.
	 A group~$T$ \emph{weakly blocks} (or \emph{is weakly blocking})~$\pi$ if 	
	$T \mathrel{R_i} \pi(i)$ holds for all~$i\in T$ and $T \mathrel{P_i} \pi(i)$ holds for some $i \in T$.
	Then, $\pi$ is \emph{core stable} if no group is blocking it
	and $\pi$ is \emph{strict core stable} if no group is weakly blocking it.
	\item Partition $\pi$ is {envy-free} if no player is envious of another player, that is, if no player~$i$ would prefer to change places with another player~$j$. Formally, partition~$\pi$ is \emph{envy-free} if, for all players~$i$ and~$j$,
	\[
		\pi\mathrel{R_i}\pi[i\leftrightarrows j].
	\]
	If $\pi[i\leftrightarrows j]\mathrel{P_i}\pi$ we also say that player~$i$ \emph{envies} player~$j$.
	\end{itemize}
	
{
\setcounter{example}{0}
	\begin{example}[continued] In our example, in partition~$[1,2,3\sep 4]$ each player is in a most preferred coalition. As such~$[1,2,3\sep 4]$ is perfect as well as social welfare optimal and satisfies all solution concepts mentioned above. 
	Moreover, all partitions except $[1\sep 2,3,4]$ and $[1,2,3,4]$ individually rational. 
	
	Now, consider partition $\pi=[1\sep 2,3\sep 4]$.
	Here, player~$2$ does not want to abandon her coalition~$\set{2,3}$ and join another as she prefers none of the following partitions to~$\pi$: $\pi[2\to\set 1]=[1,2\sep 3\sep 4]$, $\pi[2\to\set{2,3}]=[1\sep 2,3\sep 4]$, $\pi[2\to\set 4]$, and $\pi[2\to\emptyset]=[1\sep 2\sep 3\sep 4]$.
	As, however, $\pi[1\to\set{2,3}]=[1,2,3\sep 4]$ and $[1,2,3\sep 4]\mathrel P_1\pi$, partition~$\pi$ is not Nash stable.

	Also observe that for $\pi=[1\sep 2,3\sep 4]$ the group $\set{1,2,3}$ is strongly blocking, as $\pi[\set{1,2,4}\to\emptyset]=[1,2,4\sep 3]$ and $[1,2,4\sep 3]\mathrel P_{i}\pi$ for all $i\in\set{1,2,4}$.
	Thus, $\pi$ is not core stable. By contrast, $[1,4\sep 2,3]$ is core stable as only player~$1$ and~$2$ are not satisfied and both of them will only be if they can form a blocking coalition of exactly three.
	However, $\set{1,2,4}$ is still weakly blocking, and as such $[1,4\sep 2,3]$ is not strict core stable.
					
				For envy-freeness, consider partition~$\pi'=[1\sep 2,4\sep 3]$. Then, player~$3$ envies player~$4$, as $\pi'[3\leftrightarrows 4]=[1\sep 2,3\sep 4]$ and $[1\sep 2,3\sep 4]\mathrel{P_3}\pi'$. By contrast, player~$3$ does not envy player~$2$: we have $\pi'[3\leftrightarrows 2]=[1\sep 2\sep 3,4]$ but not $[1\sep 2\sep 3,4]\mathrel{P_3}\pi'$.
	\end{example}
}

	\mjwnote{
	\paragraph{Strategyproofness} Strategyproofness promises to be very interesting as in hedonic Boolean games and Boolean coalition formation games you can reason about preferences logically.
	}
		\hanote{I think SP can be easily achieved by translating the setting to a coalition choice setting in which partitions are the alternatives. In that case approval voting is strategyproof for dichotomous preferences. The problem of finding a partition with the maximum number of approvals is at least hard as checking whether a perfect partition exists. Hence the problem is NP-hard.}
\bphnote{This is a very interesting remark and we could insert it somewhere in the text.}


	\section{{A Logic for Coalition Structures}}

	In this section, we develop a logic for representing coalition
	structures. We will then use this logic as a compact specification
	language for dichotomous preference relations in hedonic games.

	\paragraph{Syntax}
	Given a set~$N$ of~$n$ players, we define a propositional language~$L_N$
	built from the usual connectives and with for every (unordered)
	pair~$\set{i,j}$ of distinct players a propositional variable
	$p_{\set{i,j}}$. The set of propositional variables we denote
	by~$\vars$. Observe that $|V|=\binom{n}{2}$.  \bphnote{I am not overly happy with the notation
	  $\vars$ for propositional variables/symbols. I would
	  prefer~$X$, $A$, $Q$, or perhaps even $\Phi$. The reason is that
	  $\exists \vars\varphi$ looks awkward. I am no great fan of
	  introducing the notation $\exists i\varphi$ for
	  $\exists\vars_i\varphi$ either as this may be confused with
	  $\bigvee_i\varphi$, which we are also using. Moreover, if we choose
	  for~$X$, we can use~$x,y,z,\dots$ as metavariables over
	  propositional variables and $\vec x$ for sequences/vectors of
	  propositional variables. If you agree, please, change the macro
	  \texttt{$\backslash$vars} in the preamble accordingly.}  For
	notational convenience we will write~$ij$ for~$p_{\set{i,j}}$. Thus,~$ij$ and~$ji$ refer to the same symbol. The
	language is interpreted on coalition structures on~$N$ and the
	informal meaning of~$ij$ is ``$i$ and $j$ are in the same coalition''.
	Formally, the formulas of the language~$L_N$, with typical
	element~$\varphi$ is given by the following grammar
	\[
		\varphi	\mathrel{\Coloneqq}	ij
				\sepp	\neg\varphi
				\sepp	(\varphi\vee\varphi)
	\]
	where $i,j\in N$ and $i\neq j$. By $|\varphi|$ we denote the \emph{size} of~$\varphi$.

	For a given coalition~$S$ of players, we write $\vars_S$ for the propositional variables in which some~$i\in S$ appears, i.e.,
	\[
		\vars_S	= \set{ij\in\vars\midd \text{$i\in S$ or $j\in S$}}\text.
	\]
	Note that for distinct players~$i$ and~$j$ we have~$\vars_i\cap\vars_j=\set{ij}$.
	The propositional language over~$\vars_S$ we denote by $L_S$. We write~$\vars_i$ and~$L_i$ for $\vars_{\set i}$ and $L_{\set i}$, respectively. 
	The remaining classical connectives~$\bot$, $\top$, $\wedge$, $\to$,
	and $\leftrightarrow$ are defined in the usual way. Moreover, for
	formulas ${\psi_1,\dots,\psi_k}$ of formulas, we have
	$\bigwedge_{1\le m\le k}\psi_m$ and $\bigvee_{1\le m\le k}\psi_m$ abbreviate
	$\psi_1\wedge\dots\wedge\psi_k$ and $\psi_1\vee\dots\vee\psi_k$,
	respectively.
%
%
	We also make use of the following useful
	notational shorthand:
		\begin{align*}
			i_1 \cdots i_m \overline{i}_{m+1} \cdots \overline{i}_p
				&	=	
			\bigwedge_{1 \leq j \leq m}i_1 i_j\wedge \bigwedge_{m < k \leq p}\neg i_1 i_k\text.
		\end{align*}
	\bphnote{I have changed the definition here, in such a way that $1234=12\wedge13\wedge14$. This is not only shorter, but also helps to guarantee that our examples are still examples of hedonic games. The examples, however, need to be reformulated in such a way that when formulating a player's goal that player should always be mentioned first. $1234$ is \emph{not} the same symbol as, for instance, $3214$. Another option, to get things correct is not requiring that in hedonic games the goals of each player~$i$ is actually from~$L_i$ but \emph{equivalent} to some formula in~$L_i$. This might be more elegant.}
	Thus, 
	$i_1 \cdots i_m \overline{i}_{m+1} \cdots \overline{i}_p$ conveys that  $i_1, \dots, i_m$ are in the same coalition and each of them in another coalition than ${i}_{m+1} \cdots {i}_p$. Thus, where $N=\set{1,2,3,4}$,
	$
		12\othree\ofour \vee 13\otwo\ofour \vee 14\otwo\othree 
	$
abbreviates
	$
		(12 \wedge \neg 13\wedge \neg 14) \vee (13 \wedge \neg 12\wedge\neg 14) \vee(14\wedge\neg 12\wedge \neg 13)
	$
	and signifies that player~$1$ is in a coalition of two players.
	

		\paragraph{Semantics}
		We interpret the formulas of $L_N$ on partitions~$\pi$ as follows. 
		\[
			\begin{array}{lcl}
				\pi \models ij& \mbox{if and only if} & \pi(i)=\pi(j)\\
				\pi \models \neg\varphi & \mbox{if and only if} & \pi\not\models\varphi\\
				\pi \models \varphi\to\psi & \mbox{if and only if} & \mbox{$\pi\not\models\varphi$ or $\pi\models\psi$}
			\end{array}
		\]
		For~$\Psi\subseteq L_N$, we have $\Psi\models\varphi$ if $\pi\models\psi$ for all $\psi\in\Psi$ implies $\pi\models\varphi$. If $\Psi=\emptyset$, we write $\models\varphi$ and say that~$\varphi$ is \emph{valid}.
		
	  Notice that partitions play a dual role in our framework: both
	  their initial role as coalition structures, and the role of models
	  in our logic. This dual role is key to using formulas of our
	  propositional language as a specification language for preference
	  relations. Thus, e.g., partition~$[1|2|345]$ satisfies 
the following formulas of $L_N$: $345$, $3\oone$,  $345\overline{1}\overline{2}$, $\neg 12 \wedge (23 \vee 34)$, and $12 \leftrightarrow 23$.


	\paragraph{Axiomatisation} We have the following axiom schemes
	    for mutually distinct players~$i$,~$j$, and~$k$,
		\begin{enumerate}[label=\ensuremath{(\mathrm A\arabic*)},leftmargin=3em]
		\setcounter{enumi}{-1}
			\item\label{axiom:taut} all propositional tautologies
			\item\label{axiom:trans} $ij\wedge jk \to ik$ \hfill (\emph{transitivity})
		\end{enumerate}
		as well as \emph{modus ponens} as the only rule of the system:
		\begin{enumerate}[label=\ensuremath{(\mathrm{MP})},leftmargin=3em]
		\item \text{from \wordbox{$\varphi$}{ $\varphi$ } and \wordbox{$\varphi\to\psi$}{ $\varphi\to\psi$ } infer \wordbox{$\psi$.}{ $\psi.$ }}
		\hfill (\text{\emph{modus ponens}})
		\end{enumerate}
		The resulting logic we refer to as \TRANS\ and write $\Psi\vdash_{\TRANS}\varphi$ if there is a derivation of~$\varphi$ from~$\Psi$, \ref{axiom:taut}, and \ref{axiom:trans},  using modus ponens.

		\begin{theorem}[Completeness]\label{thm:completeness} Let $\Psi\cup\set\varphi\subseteq L_N$. Then,
		\[
			\text{$\Psi\vdash_\TRANS\varphi$ \wordbox{ if and only if }{ iif and only iff } $\Psi\models\varphi$.}
		\]
		\end{theorem}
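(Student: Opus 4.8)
The plan is to prove the two directions separately: soundness ($\Psi\vdash_{\TRANS}\varphi$ implies $\Psi\models\varphi$) is routine, and completeness goes via a canonical-model (Lindenbaum) construction, the twist being that the only non-propositional constraint on partition-valuations is transitivity, which is exactly axiom \ref{axiom:trans}.

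For soundness I would induct on the length of a derivation, checking that every instance of the axiom schemes is valid and that modus ponens preserves the property ``entailed by $\Psi$''. Propositional tautologies \ref{axiom:taut} are valid because the satisfaction clauses for $\neg$ and $\vee$ (hence for all Boolean connectives) are the standard truth-table ones; for \ref{axiom:trans}, if $\pi\models ij$ and $\pi\models jk$ then $\pi(i)=\pi(j)=\pi(k)$, so $\pi\models ik$. Preservation under modus ponens is immediate from the semantics of~$\to$.

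For completeness I would argue the contrapositive. If $\Psi\not\vdash_{\TRANS}\varphi$ then $\Psi\cup\set{\neg\varphi}$ is $\TRANS$-consistent, so by the usual Lindenbaum construction it extends to a maximal $\TRANS$-consistent set $\Delta\subseteq L_N$; such a $\Delta$ is deductively closed, contains exactly one of $\psi,\neg\psi$ for each $\psi$, and contains $\psi\vee\chi$ iff it contains $\psi$ or $\chi$. From $\Delta$ I build a canonical partition: define $\sim$ on $N$ by $i\sim i$ for all $i$, and for distinct $i,j$, put $i\sim j$ iff $ij\in\Delta$. This relation is reflexive by fiat and symmetric because $ij$ and $ji$ denote the same variable; transitivity is the one place \ref{axiom:trans} is used — if $i,j,k$ are distinct with $ij,jk\in\Delta$, then $\Delta$ being deductively closed gives $ij\wedge jk\in\Delta$, and \ref{axiom:trans} with modus ponens gives $ik\in\Delta$, so $i\sim k$; the cases where two of $i,j,k$ coincide are trivial. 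Hence $\sim$ is an equivalence relation, and I let $\pi_\Delta$ be the partition of $N$ into $\sim$-classes together with $\emptyset$ (as required by our convention, which is harmless since it changes no $\pi_\Delta(i)$). The heart of the matter is the Truth Lemma: for every $\psi\in L_N$, $\pi_\Delta\models\psi$ iff $\psi\in\Delta$, proved by induction on $\psi$ following the grammar. The atomic case is the definition of $\sim$: for distinct $i,j$, $\pi_\Delta\models ij$ iff $\pi_\Delta(i)=\pi_\Delta(j)$ iff $i\sim j$ iff $ij\in\Delta$. The $\neg\chi$ and $\chi\vee\theta$ cases follow from the induction hypothesis and the listed closure properties of maximal consistent sets. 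Applying the Truth Lemma to each member of $\Psi$ and to $\neg\varphi$, all of which lie in $\Delta$, yields $\pi_\Delta\models\Psi$ and $\pi_\Delta\not\models\varphi$, so $\Psi\not\models\varphi$.

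I expect no serious obstacle. The only conceptually important observation is that \ref{axiom:trans} is precisely what is needed: reflexivity of $\sim$ is not expressible in $L_N$ and holds trivially, symmetry is built into the notation ($ij=ji$), and transitivity is the sole genuine constraint that valuations arising from partitions must satisfy. The point requiring the most care is the verification that $\sim$ is transitive, together with the standard bookkeeping about maximal consistent sets invoked in the Truth Lemma, but both are routine.
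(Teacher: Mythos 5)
Your proposal is correct and follows essentially the same route as the paper's own proof: soundness by checking validity of the axioms and preservation under modus ponens, and completeness via a Lindenbaum extension of $\Psi\cup\set{\neg\varphi}$, an equivalence relation on players read off from the atoms in the maximal consistent set (with \ref{axiom:trans} supplying transitivity), and a Truth Lemma proved by structural induction. Your explicit handling of reflexivity by fiat and symmetry via the identification $ij=ji$ is a small tidying-up of a point the paper leaves implicit, but the argument is the same.
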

		\begin{proof}[(sketch)]
			Soundness is straightforward. For completeness a standard Lindenbaum construction can be used. To this end, assume $\Psi\not\vdash_{\TRANS}\varphi$. Then, $\Psi\cup\set{\neg\varphi}$ is consistent and can as such be extended to a maximal consistent theory $\Psi^*$. Define a relation~$\sim_{\Psi^*}$ such that for all $i,j\in N$, 
			\[
				\text{$i\sim_{\Psi^*}j$ \wordbox{ if and only if }{ iif and only iff } $ij\in\Psi^*$.}
			\]
		The axiom schemes~\ref{axiom:taut} and~\ref{axiom:trans} ensure that~$\sim_{\Psi^*}$ is a well-defined equivalence relation. Let $[\mathwordbox{i}{n}]_{\sim_{\Psi^*}}=\set{j\in N\midd i\sim_{\Psi^*}j}$ be the equivalence class under~$\sim_{\Psi^*}$ to which player~$i$ belongs. Then define the partition $\pi_{\Psi^*}=\set{[\mathwordbox{i}{n}]_{\sim_{\Psi^*}}\midd i\in N}$. By a straightforward structural induction, it can then be shown that for all $\psi\in L_N$,
		\[
			\text{$\pi_{\Psi^*}\models\psi$	\wordbox{ if and only if }{ iif and only iff } $\psi\in\Psi^*$.}
		\]
	It follows that $\pi_{\Psi^*}\models\Psi$ and $\pi_{\Psi^*}\not\models\varphi$. Hence, $\Psi\not\models\varphi$.
		\end{proof}

	Alternatively, one can reason with coalition structures in standard
	propositional logic, by writing the transitivity axiom directly as a
	propositional logic formula. Let
	\[
		\trans = \bigwedge_{i,j,k\in N} (ij \wedge jk \rightarrow ik)\text.
	\]
	Then, for any propositional formulas~$\varphi$ and~$\psi$ of $L_N$,
	\[
		\varphi \vdash_\TRANS \psi \text{\wordbox{ if and only if }{ iif and only iff }} \varphi \wedge \trans \vdash \psi
		\]
	that is, checking whether a formula~$\varphi$ implies another formula~$\psi$ in~$\TRANS$ is equivalent to saying that~$\varphi$ together with the transitivity constraint implies~$\psi$. This means that reasoning tasks in~$\TRANS$ can be done with a classical propositional theorem prover. In what follows we say that two formulas~$\varphi$ and~$\psi$ are $\TRANS$-equivalent whenever their equivalence can be proven in~$\TRANS$, \ie $\vdash_\TRANS\varphi\leftrightarrow\psi$.


	\section{Boolean Hedonic Games}
	The denotation of a formula $\varphi$ of our propositional language is
	a set of coalition structures, and we can naturally interpret these as
	being the desirable or satisfactory coalition structures for a particular player.
	Thus, instead of writing a hedonic game with dichotomous preferences
	as a structure $(N,R_1, \ldots, R_n)$, in which we explicitly
	enumerate preference relations $R_i$, we can instead write
	$(N,\gamma_1,\dots,\gamma_n)$, where $\gamma_i$ is a formula of our
	propositional language that acts as a specification of the preference
	relation $R_i$.  Intuitively,~$\gamma_i$ represents player~$i$'s
	`goal' and player~$i$ is satisfied if his goal is achieved and
	unsatisfied if he is not. We refer to a structure $(N,\gamma_1,
	\ldots, \gamma_n)$ as a \emph{Boolean hedonic game}.  Thus, a Boolean
	hedonic game $(N,\gamma_1,\dots,\gamma_{n})$ represents the
	(standard) hedonic game $(N,R_1,\dots,R_n)$ with for each~$i$,
		\[
			\text{$\pi(i)\mathrel{R_i}\pi'(i)$ \wordbox{if and only if}{xif and only ifx} $\pi\models\gamma_i$ implies $\pi'\models\gamma_i$.}
		\]
	  Observe that, defined thus, the preferences of each player in a
	  hedonic Boolean game are dichotomous.  
  
	  It should be clear that every dichotomous preference relation~$R_i$ can be
	  specified by a propositional formula~$\gamma_i$, and hence our
	  propositional language forms a fully expressive representation
	  scheme for Boolean hedonic games.%
	  \footnote{Let~$i$ be a player with dichotomous preferences~$R_i$ and let~$X_i$ be the set of coalitions 
	  most preferred by~$i$, i.e., 
	  $S \in X_i$ if and only if $S \mathrel{R_i}S'$ for all coalitions~$S$ and~$S'$ containing~$i$.
	Then,~$R_i$ is represented by following formula of~$L_i$ in disjunctive normal form:
	 \[
	  	\bigvee_{S  \in X_i}\Big(\bigwedge_{j\in S}ij\wedge\bigwedge_{k\notin S}\neg ik\Big)\text.
	 \]
	   } In fact, formulas in~$L_N$ are strictly more expressive in the sense that they can represent \emph{any} dichotomous preference relation over partitions rather than just preference relations over partitions as induced by a preference relation~$R_i$ for a player~$i$ over \emph{coalitions} in~$\mathscr N_i$. We find, however, that every Boolean hedonic game~$(N,\gamma_1,\dots,\gamma_n)$ represents a hedonic game with dichotomous preferences provided that every player's goal~$\gamma_i$ is equivalent to a formula in the language~$L_i$, the sublanguage of~$L_N$ in which only variables in $V_{i}=\set{ij\midd j\in N\setminus\set i}$ occur. Intuitively, formulas in~$L_i$ only convey information about the coalitions player~$i$ is in or she is not in. 
	\begin{proposition}\label{proposition:characterisation_hedonic_aspect}
	If a Boolean hedonic game $(N,\gamma_1,\dots,\gamma_n)$ represents a hedonic game with dichotomous preferences, then for every player~$i$ there is a formula~$\varphi_i\in L_i$ that is $\TRANS$-equivalent to~$\gamma_i$. Moreover, if for every player~$i$ there is a formula~$\varphi_i\in L_i$ that is $\TRANS$-equivalent to~$\gamma_i$, then $(N,\gamma_1,\dots,\gamma_n)$ represents a hedonic game with dichotomous preferences.
	\end{proposition}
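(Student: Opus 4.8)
The plan is to recast the hypothesis that $(N,\gamma_1,\dots,\gamma_n)$ represents a hedonic game with dichotomous preferences as a purely semantic \emph{locality} condition on the goals, and then to prove the two implications separately, using the completeness theorem (\thmref{thm:completeness}). The pivotal observation, which one checks directly from the definition of representation, is this: $(N,\gamma_1,\dots,\gamma_n)$ represents a hedonic game with dichotomous preferences if and only if, for every player~$i$ and all partitions $\pi,\pi'$ with $\pi(i)=\pi'(i)$, we have $\pi\models\gamma_i$ exactly when $\pi'\models\gamma_i$. Indeed, this is precisely the condition under which the clause ``$\pi(i)\mathrel{R_i}\pi'(i)$ iff $\pi\models\gamma_i$ implies $\pi'\models\gamma_i$'' assigns a definite truth value to every pair of coalitions in $\mathscr N_i$ (if it fails at a coalition $S$, then reading that clause with two admissible representatives of $S$ in the two possible orders yields contradictory verdicts on $S\mathrel{R_i}S$); and whenever it holds, the resulting $R_i$ is automatically reflexive, transitive, complete, and dichotomous, with satisfactory class $\set{S\in\mathscr N_i\midd \pi\models\gamma_i\text{ for some (equivalently, every) }\pi\text{ with }\pi(i)=S}$. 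Call this the locality condition on $\gamma_i$. I will also use the consequence of \thmref{thm:completeness} that $\gamma_i$ and $\varphi_i$ are $\TRANS$-equivalent precisely when they are satisfied by exactly the same partitions.

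For the \emph{if} direction, suppose each $\gamma_i$ is $\TRANS$-equivalent to some $\varphi_i\in L_i$, so $\gamma_i$ and $\varphi_i$ hold at exactly the same partitions. Every atom occurring in $\varphi_i$ has the form $ij$ with $j\neq i$, and $\pi\models ij$ iff $j\in\pi(i)$; hence the truth value at $\pi$ of each atom of $\varphi_i$, and therefore of $\varphi_i$ itself, and therefore of $\gamma_i$, depends on $\pi$ only through $\pi(i)$. This is exactly the locality condition for every $i$, so by the reformulation the game represents a hedonic game with dichotomous preferences.

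For the \emph{only if} direction, assume locality for every $\gamma_i$, fix $i$, and set $X_i=\set{S\in\mathscr N_i\midd \pi\models\gamma_i\text{ for some partition }\pi\text{ with }\pi(i)=S}$, where by locality the existential quantifier may equally be read universally. Put
\[
  \varphi_i \;=\; \bigvee_{S\in X_i}\Bigl(\bigwedge_{j\in S\setminus\set{i}}ij\;\wedge\;\bigwedge_{k\in N\setminus S}\neg ik\Bigr),
\]
the disjunctive normal form associated with $X_i$ (as in the footnote above), with the empty disjunction read as~$\bot$; this formula visibly lies in $L_i$. The one routine verification is that, for each $S\in\mathscr N_i$ and each partition $\pi$, the $S$-th disjunct holds at $\pi$ if and only if $\pi(i)=S$: its first conjunct forces $S\subseteq\pi(i)$, its second forces $\pi(i)\cap(N\setminus S)=\emptyset$, that is, $\pi(i)\subseteq S$, and conversely both conjuncts hold trivially when $\pi(i)=S$. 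Hence $\pi\models\varphi_i$ iff $\pi(i)\in X_i$ iff $\pi\models\gamma_i$, the last equivalence by the definition of $X_i$ together with locality. So $\gamma_i$ and $\varphi_i$ are true at exactly the same partitions, whence $\vdash_\TRANS\gamma_i\leftrightarrow\varphi_i$; that is, $\varphi_i\in L_i$ is $\TRANS$-equivalent to $\gamma_i$, as required.

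I expect the only genuine obstacle to be the first step, namely recognising that locality is exactly the content of the representation hypothesis; once that is in hand, the \emph{if} direction is immediate and the \emph{only if} direction is just the normal-form construction above together with an appeal to completeness. Minor points to handle with care are the degenerate cases $X_i=\emptyset$ (where $\varphi_i=\bot$, matching an unsatisfiable $\gamma_i$) and $X_i=\mathscr N_i$ (where $\varphi_i$, like $\gamma_i$, is provable in $\TRANS$), and the harmless convention of suppressing a spurious conjunct $ii$ from the conjunctions.
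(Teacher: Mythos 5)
Your proof is correct and follows essentially the same route as the paper's own (sketched) argument: the key lemma that the truth of an $L_i$-formula at a partition $\pi$ depends only on $\pi(i)$, combined with the disjunctive-normal-form construction that the paper gives in a footnote. You merely make explicit what the paper leaves as ``follows as a corollary'' --- the equivalence of representability with this locality condition, and the appeal to \thmref{thm:completeness} to pass from having the same models to $\TRANS$-equivalence --- which is a faithful filling-in rather than a different approach.
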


	\begin{proof}[(sketch)]
		For a player~$i$ and $\varphi$ a formula in~$L_i$, a straightforward inductive argument shows that 
		\[
		\text{$\pi\models\varphi$ \wordbox{if and only if}{iiif and only ifff} $\pi'\models\varphi$ for all $\pi'$ with $\pi'(i)=\pi(i)$.}
		\]
		Then, the result follows as a corollary.
	\end{proof}

	  Often, the use of propositional
	  formulas $\gamma_i$ gives a `concise' representation of the
	  preference relation $R_i$, although of course in the worst case the
	  shortest formula $\gamma_i$ representing $R_i$ may be of size
	  exponential in the number of players. In what follows, we will write
	  $(N,\gamma_1, \ldots, \gamma_n)$, understanding that we are
	  referring to the game $(N,R_1, \ldots, R_n)$ corresponding to this
	  specification.

{
\setcounter{example}{0}
	\begin{example}[continued]
			The hedonic game with dichotomous preferences in Example~\ref{ex1} is represented by the Boolean hedonic game $(N,\gamma_1,\gamma_2,\gamma_3,\gamma_4)$ with $N=\set{1,2,3,4}$ and the players' goals given by:
			\begin{align*}
				\gamma_1 &	= 	(123 \vee 124 \vee 134)						&
				\gamma_2 &	= 	(213\ofour \vee 214\othree \vee 234\oone)		\\
				\gamma_3 &	=	(31\vee 32)\wedge\neg34						&
				\gamma_4 &	= 	\neg 423\text.
			\end{align*}
For each player~$i$ we then have that $\pi\models\gamma_i$ if and only if $\pi\in\mathscr N_i^+$.
	\end{example}
	}

	\bphnote{Would this be a good place to present the (easy) characterisation of hedonic games? ($(N,\gamma_1,\dots,\gamma_n)$ is a hedonic game if and only if for each player~$i$ there is a formula $\gamma_i$ in the language~$L_i$ on $\vars_i$ such that $\gamma_i\equiv_{P} \gamma_i$)}

	\section{Substitution and Deviation}\label{section:substition_and_deviation}
	\bphnote{It might be better to have this section \emph{after} Boolean hedonic games are being introduced.}

	We establish a formal link between substitution in formulas of our language and
	the possibility of players deviating from their respective coalition in a
	given partition and joining other coalitions.


\paragraph{Substitution}
We first introduce some formal notation and terminology with respect to substitution of formulas for variables in our logic. 	

For~$ij$ a propositional variable in~$\vars_N$
	and~$\varphi$ and~$\psi$  formulas of~$L_N$, we denote by $\varphi_{ij \leftarrow \psi}$ the \emph{uniform
	substitution} of variable~$ij$ by~$\psi$ in $\varphi$. If 
	$\vec{\imath\jmath}=i_1j_1,\dots,i_kj_k$ is a sequence of~$k$ distinct variables in~$\vars$ 
	and $\vec\psi=\psi_1,\dots,\psi_k$ a sequence of~$k$ formulas,
	\begin{align*}
		\varphi_{\vec{\imath\jmath}\leftarrow\vec\psi}
		&	\mathwordbox{=}{===}  \varphi_{i_1j_1,\dots,i_kj_k\leftarrow \psi_1,\dots,\psi_k}
	\end{align*}
	denotes the \emph{simultaneous substitution} of each $i_mj_m$ by $\psi_m$ ($1\le m\le k$).
	Thus, e.g., $(ij\vee\neg jk)_{ij,jk\leftarrow jk,ik}=jk\vee\neg ik$.
	A special case, which recurs frequently in what follows, is if every~$\psi_i$ is a Boolean, i.e., if $\psi_1,\dots,\psi_k\in\set{\top,\bot}$.
	Sequences $\vec b=b_1,\dots,b_k$ where $b_1,\dots,b_k\in\set{\top,\bot}$ we will also refer to as \emph{Boolean vectors of length~$k$}.
	Thus, e.g., $\top,\bot$ is a Boolean vector of length~$2$ and $(ij\wedge jk\to ki)_{ij,ki\leftarrow\top,\bot}=\top\wedge jk\to\bot$.

	\paragraph{Characterising individual deviations}\label{subsection:characterising_individual_diviations}

	Some of the stability concepts for Boolean hedonic games we consider
	in this paper, e.g., Nash stability, are based on which coalitions an individual player~$i$ can join given a partition~$\pi$. Recall that these coalitions are given by $\pi|_{-i}$. Of course, not all groups of agents are included in~$\pi_{-i}$. For instance, let partition~$\pi$ be given by $[\,12\sep34\sep 5\,]$. Then, player~$1$ can join coalition~$\set{3,4}$ but cannot form a coalition with players~$4$ and~$5$ by unilaterally deviating from~$\pi$. We find that the set~$\pi|_{-i}$ can be characterised in our logic. This furthermore yields a logical characterisation of when a player~$i$ can unilaterally break loose from his coalition, join another one and thereby guarantee that a given formula~$\varphi$ will be satisfied. A particularly interesting case is if~$\varphi$ implies the respective player's goal. We thus gain expressive power with respect to whether a player can \emph{beneficially} deviate from a given partition, a crucial concept.

	\bphnote{The proofs of the following lemmas are included. Presumably, it would be better to leave them out from the submission.}

	\begin{lemma}\label{lemma:forgetting1}
		Let~$\pi$ be a partition,~$i$ a player, $B$~a group of players in $N\setminus\set i$. Let furthermore~$\vec b=b_1,\dots,b_{n-1}$ be a Boolean vector of length~$n-1$ and $i\vec\jmath=ij_1,\dots,ij_{n-1}$ an enumeration of~$\vars_i$ such that $B=\set{j\midd ij\hatsub =\top}$. 
Then,
		\begin{enumerate}[label={\ensuremath{(\roman*)}},leftmargin=2.25em]
			\item\label{item:forgettting1i} $B\in\pi|_{-i}$ 									\wordbox{iff}{iifff} 	$\pi\models\trans\hatsub $,
			\item\label{item:forgettting1ii} $B\in\pi|_{-i}$ and $\pi[i\to B]\models\varphi$ 	\wordbox{iff}{iifff} 	$\pi\models(\varphi\wedge\trans)\hatsub$.
		\end{enumerate}
	\end{lemma}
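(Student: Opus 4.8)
The plan is to reduce \ref{item:forgettting1ii} to \ref{item:forgettting1i} via a substitution lemma, and to prove \ref{item:forgettting1i} by inspecting which conjuncts of $\trans$ survive the Boolean substitution $\hatsub$.

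\textbf{Proof of \ref{item:forgettting1i}.} Recall $\trans=\bigwedge(ab\wedge bc\to ac)$, the conjunction ranging over mutually distinct $a,b,c\in N$. Since $a,b,c$ are distinct, \emph{at most one} of them equals $i$, so the conjuncts split into four groups. A conjunct with $i\notin\set{a,b,c}$ is unchanged by $\hatsub$ and is satisfied by every partition (``being in the same cell'' is transitive), so it contributes nothing. A conjunct in which $i$ occurs is turned by $\hatsub$ into an implication where each $ij$ has been replaced by $\top$ (if $j\in B$) or $\bot$ (if $j\notin B$); simplifying with the propositional tautologies, the conjunct collapses to $\top$ except in the following cases: when the occurrence of ``$i$'' is the middle index and the two other indices are both in $B$, it becomes $bb'$ for those indices $b,b'\in B$; when ``$i$'' is an endpoint and one remaining index lies in $B$ while the other lies in $N\setminus(B\cup\set i)$, it becomes $\neg bc$ for that $b\in B$, $c\in N\setminus(B\cup\set i)$. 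Hence $\trans\hatsub$ is propositionally equivalent to $\bigwedge_{b,b'\in B,\,b\neq b'}bb'\wedge\bigwedge_{b\in B,\,c\in N\setminus(B\cup\set i)}\neg bc$ (conjoined with transitivity restricted to $N\setminus\set i$, which always holds). Thus $\pi\models\trans\hatsub$ iff $\pi$ places all of $B$ into a single cell $S$ with $S\cap(N\setminus(B\cup\set i))=\emptyset$, i.e.\ $B\subseteq S\subseteq B\cup\set i$, i.e.\ $S=B$ or $S=B\cup\set i=\pi(i)$. In the first case $B$ is itself a cell of $\pi$ with $i\notin B$, so $B\in\pi|_{-i}$; in the second $B=\pi(i)\setminus\set i\in\pi|_{-i}$. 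Conversely every $B\in\pi|_{-i}$ arises in exactly one of these two ways. The degenerate case $B=\emptyset$ is handled separately: both conjunctions are empty, $\pi\models\trans\hatsub$ trivially, and $\emptyset\in\pi|_{-i}$ by our convention that $\emptyset$ belongs to every partition.

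\textbf{Proof of \ref{item:forgettting1ii}.} Substitution commutes with the connectives, so $(\varphi\wedge\trans)\hatsub=\varphi\hatsub\wedge\trans\hatsub$, and by \ref{item:forgettting1i} it remains to show that whenever $B\in\pi|_{-i}$ we have $\pi[i\to B]\models\varphi$ iff $\pi\models\varphi\hatsub$. This is a structural induction on $\varphi$, the connective cases being immediate. For the base case, let $v$ be a variable. If $v=ij$ with $j\neq i$, then $v\hatsub=b_m$ for the unique $m$ with $ij_m=ij$, and by the choice of $\vec b$ this is $\top$ iff $j\in B$; on the other hand $\pi[i\to B](i)=B\cup\set i$, so $\pi[i\to B]\models ij$ iff $j\in B$. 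If $v=jk$ with $i\notin\set{j,k}$, then $v\hatsub=v$, and since forming $\pi[i\to B]$ only replaces the cell $B$ of $\pi|_{-i}$ by $B\cup\set i$ — which neither merges nor splits any cells among the players in $N\setminus\set i$ — we get $\pi[i\to B]\models jk$ iff $\pi\models jk$. Combining the two parts yields the claim.

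\textbf{Main obstacle.} The only delicate point is the bookkeeping in \ref{item:forgettting1i}: one must enumerate correctly the positions the single occurrence of ``$i$'' can take in a triple, verify that the Boolean simplifications produce \emph{exactly} the clauses expressing ``$B$ is a cell of $\pi|_{-i}$'', and treat the empty-set convention as a separate, trivial case. Everything else is routine.
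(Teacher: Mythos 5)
Your proof is correct and follows essentially the same route as the paper's: both hinge on a case analysis of the transitivity conjuncts according to the position of $i$ (the paper argues contrapositively from a violated conjunct, you instead simplify $\trans\hatsub$ to the explicit equivalent $\bigwedge_{b,b'\in B}bb'\wedge\bigwedge_{b\in B,\,c\notin B\cup\set i}\neg bc$ and read off the equivalence with $B\in\pi|_{-i}$, handling $B=\emptyset$ via the $\emptyset\in\pi$ convention just as the paper's setup requires). Your treatment of \ref{item:forgettting1ii} — distributing the substitution over $\wedge$ and proving by structural induction that, for $B\in\pi|_{-i}$, $\pi\models\varphi\hatsub$ iff $\pi[i\to B]\models\varphi$ — is exactly the induction the paper only sketches as ``similar principles,'' so no gap remains.
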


	\begin{proof}
	{
	\renewcommand{\hatsub}{'}
	We prove~\ref{item:forgettting1i}; the proof for~\ref{item:forgettting1ii} is by structural induction on~$\varphi$ and relies on similar principles as~\ref{item:forgettting1i}.
	As~$\vec b$ and $i\vec\jmath$ are fixed throughout the proof, 
	for better readability, we write~$\varphi'$ for~$\varphi_{i\vec\jmath\leftarrow\vec b}$.

	For the ``only if''-direction, assume that $B\in \pi_{-i}$ as well as $\pi\not\models\trans\hatsub$. Observe that
	$
		\trans\hatsub \mathwordbox{=}{==}
		\bigwedge_{k,l,m}\big(kl\hatsub \wedge lm\hatsub \rightarrow km\hatsub\big)\text.
	$ 
	Accordingly, there are some (mutually distinct)~$k$, $l$,~and~$m$ such that $\pi\not\models kl\hatsub \wedge lm\hatsub \rightarrow km\hatsub$.
	It suffices to consider the following three cases. 
	\begin{align*}
	(a)&\quad i\notin\set{k,l,m}\text,& 
	(b)&\quad i=k\text{,} &
	(c)&\quad i=l\text. 
	\end{align*}

	Case~$(a)$ cannot occur as we would have $kl\hatsub=kl$, $lm\hatsub=lm$, $km\hatsub=km$, and $kl \wedge lm \rightarrow km$ is a theorem of the system. 

	If~$(b)$, then
	$
		\pi\not\models il\hatsub \wedge lm\hatsub \rightarrow im\hatsub\text.
	$
	It follows that $\pi\models il\hatsub$, $\pi\models lm\hatsub$, and $\pi\not\models im\hatsub$. Observe that in this case $lm\hatsub=lm$. Hence, $\pi(l)=\pi(m)$. Also notice that $il\hatsub,im\hatsub\in\set{\top,\bot}$ and, thus, $im\hatsub=\bot$ and $il\hatsub=\top$.
	Accordingly, $l\in B$ but $m\notin B$. As $i\neq m$ and having assumed $B\in\pi|_{-i}$, a contradiction follows:
	\[
		\pi(m)\neq\pi(i)=\pi(l)=\pi(m)\text.
	\]

	If~$(c)$, we have
	$
		\pi\not\models ik\hatsub \wedge im\hatsub \rightarrow km\hatsub\text.
	$
	Thus, $\pi\models ik\hatsub$, $\pi\models im\hatsub$, and $\pi\not\models km\hatsub$. Observe that $km\hatsub=km$. Hence, $\pi(k)\neq\pi(m)$. Moreover, $ik\hatsub,im\hatsub\in\set{\top,\bot}$, from which follows that
	 $ik\hatsub=\top$ and $im\hatsub=\top$.
	Accordingly, both $k,m\in B$. With $B\in\pi|_{-i}$,	 we obtain that $\pi(k)=\pi(m)$, a contradiction.

	For the ``if''-direction, assume $B\notin\pi|_{-i}$ and~$B\neq\emptyset$. Because of the latter, there is some $j\in B$. Accordingly, $ij\hatsub=\top$.
	As $B\notin\pi|_{-i}$, and thus in particular $B\neq \pi(j)\setminus\set{i}$, there are two possibilities:
	\begin{enumerate}[label=\ensuremath{(\arabic*)},leftmargin=*,itemsep=.4ex]
		\item there is some~$k\neq i$ with $k\in\pi(j)$ and $k\notin B$, or
		\item there is some~$k\neq i$ with $k\notin\pi(j)$ and $k\in B$.
	\end{enumerate}
	If~$(1)$, we have $\pi(j)=\pi(k)$ as well as $ik\hatsub=\bot$. As $jk\hatsub=jk$, it holds that $\pi\models ij\hatsub\wedge jk\hatsub$ but
	$\pi\not\models ik\hatsub$. If~$(2)$, however, we have $\pi(j)\neq\pi(k)$ and $ik\hatsub=\top$. As $jk\hatsub=jk$, it holds that $\pi\models ij\hatsub\wedge ik\hatsub$ but
	$\pi\not\models jk\hatsub$. In either case it follows that $\pi\not\models\trans\hatsub$.
	}
	\end{proof}
	The following example illustrates Lemma~\ref{lemma:forgetting1}.
	
	\begin{example}
		Consider the partition $\pi=[12\sep34\sep 5]$. Then, $\pi|_{-1}=\set{\set 2,\set{34},\set 5,\emptyset}$. 
		Let $1\vec\jmath=12,13,14,15$ be a fixed enumeration of $\vars_1$. Also let $\vec b_1=\bot,\top,\top,\bot$ and $\vec b_2=\bot,\top,\bot,\top$ be Boolean vectors (of length~$4$). Then, 
		\[
		[\,12\sep34\sep 5\,]\models\trans_{12,13,14,15\leftarrow\bot,\top,\top,\bot}\text.
		\]
		(This may be established, somewhat tediously, by painstakingly
	  checking all~$30$ conjuncts of the form $(kl\wedge lm)\to km$ of
	  $\trans$.)  Now, observe that $\set{j\midd 1j_{1\vec j\leftarrow\vec
	      b_1}}=\set{3,4}$ and that $\set{3,4}\in\pi_{-1}$.  On the other
	  hand, observe that $(13\wedge 15 \to 35)_{1\vec\jmath\leftarrow\vec
	    b_2}=(\top\wedge\top)\to 35$. It is easily established, however,
	  that $ [12|34|5]$ does not satisfy $(\top\wedge\top)\to
	  35$ and, hence, neither $\trans_{1\vec\jmath\leftarrow\vec
	    b_2}$. Finally, observe that $\set{j\midd 1j_{1\vec
	      \jmath\leftarrow\vec b_1}}=\set{3,5}$ and that~$\set{3,5}$ is not
	  in~$\pi|_{-1}$.
	\end{example}

	\bphnote{Lemma~\ref{lemma:forgetting2} is methinks the crucial for the characterisation of the stability concept. Silly me, only formulated Lemmas~\ref{lemma:forgetting1} and~\ref{lemma:forgetting2} for individual deviations, believing that similar results for ``group deviations'' would straightforwardly follow from them. It appears, however, we need ``group deviation'' variants so as also have (uniform) characterisations of blocking coalitions (which I believe can then be achieved in polynomial space) and core (which cannot). Obviously, these stronger versions would subsume Lemmas~\ref{lemma:forgetting1} and~\ref{lemma:forgetting2} and we do not need both.}

	\bphnote{For the moment the quantifiers $\hat\exists i$ and $\hat\forall i$ only play a role in the characterisation of Nash stability. }

	We now introduce the following abbreviation, where $i\vec\jmath=ij_1,\dots,ij_{n-1}$ is assumed to be a fixed enumeration of~$\vars_i$.
	\begin{align*}
	\hat\exists i\,\varphi\;	&	= \scalebox{1}[1]{$\displaystyle  \bigvee_{\vec b\in\set{\bot,\top}^{n-1}} (\varphi\wedge\trans)\hatsub$}
	\end{align*}
	Thus, $\hat\exists i$ can be understood as the operation
	of forgetting everything about player $i$ (in the sense of
	\cite{lin1994forget}) while taking the transitivity constraint into
	account. Intuitively,  $\hat\exists i\,\varphi$ signifies that given partition~$\pi$ player~$i$ can deviate to some coalition such that that~$\varphi$ is satisfied.


	%
	%

	\begin{proposition}\label{proposition:quantification}
		Let~$\pi$ be a partition,~$i$ a player, and~$\varphi$ a formula of~$L_N$. Then,
		\[
			\text{$\pi\models\hat\exists i\,\varphi	$	\text{\wordbox{iff}{iiiifffff} $\pi[i\to S]\models\varphi$ \;\wordbox[l]{for some $S\in \pi|_{-i}$,}{\quad for some $S\in \pi|_{-i}$}}}
		\]
	\end{proposition}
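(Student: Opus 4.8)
The plan is to obtain Proposition~\ref{proposition:quantification} essentially for free from Lemma~\ref{lemma:forgetting1}\ref{item:forgettting1ii}, using only the truth condition for disjunctions and one piece of bookkeeping: the Boolean vectors $\vec b$ indexing the big disjunction in the definition of $\hat\exists i$ correspond bijectively to the subsets of $N\setminus\set i$, and the coalitions a player~$i$ can join by a unilateral deviation from $\pi$ are exactly certain such subsets, namely the members of $\pi|_{-i}$.

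First I would fix, once and for all, the enumeration $i\vec\jmath=ij_1,\dots,ij_{n-1}$ of $\vars_i$ used in the definition of $\hat\exists i$; since $j_1,\dots,j_{n-1}$ is then an enumeration of $N\setminus\set i$, each Boolean vector $\vec b=b_1,\dots,b_{n-1}$ determines the group $B_{\vec b}=\set{j_m\midd 1\le m\le n-1 \text{ and } b_m=\top}$, equivalently $j\in B_{\vec b}$ iff $(ij)\hatsub=\top$. The map $\vec b\mapsto B_{\vec b}$ is then a bijection from $\set{\bot,\top}^{n-1}$ onto the power set of $N\setminus\set i$.

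The argument then runs through the chain
\begin{align*}
\pi\models\hat\exists i\,\varphi
&\iff \pi\models(\varphi\wedge\trans)\hatsub \text{ for some }\vec b\in\set{\bot,\top}^{n-1}\\
&\iff B_{\vec b}\in\pi|_{-i}\text{ and }\pi[i\to B_{\vec b}]\models\varphi \text{ for some }\vec b\in\set{\bot,\top}^{n-1}\\
&\iff S\in\pi|_{-i}\text{ and }\pi[i\to S]\models\varphi \text{ for some }S\subseteq N\setminus\set i,
\end{align*}
where the first equivalence is just the semantics of $\bigvee$, the second is Lemma~\ref{lemma:forgetting1}\ref{item:forgettting1ii} applied with $B=B_{\vec b}$ (legitimate because the enumeration quantified in that lemma may be taken to be the one we fixed), and the third is the bijection $\vec b\mapsto B_{\vec b}$. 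Since every member of $\pi|_{-i}$ is by definition a subset of $N\setminus\set i$, the side condition $S\subseteq N\setminus\set i$ in the last line is automatic, so the last line is precisely ``$\pi[i\to S]\models\varphi$ for some $S\in\pi|_{-i}$'', which is the right-hand side of the proposition.

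I do not anticipate a genuine obstacle: the mathematical substance all sits inside Lemma~\ref{lemma:forgetting1}. The only two points I would be careful to spell out are that the empty coalition is correctly covered — $B_{\vec b}=\emptyset$ for $\vec b=\bot,\dots,\bot$, and $\emptyset\in\pi|_{-i}$ by our convention that every partition contains $\emptyset$, so the option ``$i$ goes off on its own'', i.e. $\pi[i\to\emptyset]$, is among the deviations captured by the disjunction — and that one must use literally the same enumeration $i\vec\jmath$ in the definition of $\hat\exists i$, in Lemma~\ref{lemma:forgetting1}, and in the definition of $B_{\vec b}$, so that the several occurrences of ``$B$'' genuinely coincide.
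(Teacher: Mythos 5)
Your proof is correct and follows essentially the same route as the paper's own: both directions reduce entirely to Lemma~\ref{lemma:forgetting1}\ref{item:forgettting1ii} via the correspondence between Boolean vectors $\vec b$ and subsets of $N\setminus\set{i}$, with the semantics of the big disjunction doing the rest. Your presentation as a single chain of equivalences, with the bijection $\vec b\mapsto B_{\vec b}$ and the $\emptyset\in\pi|_{-i}$ convention made explicit, is just a slightly more careful write-up of the paper's two-direction argument.
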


	\begin{proof}
		First assume $\pi\models\hat\exists i\,\varphi$. Then, $\pi\models(\varphi\wedge\trans)\hatsub$\quad for some $\vec b\in\set{\bot,\top}^{n-1}$. 
		Define $S=\set{j\midd ij\hatsub =\top}$. By Lemma~\ref{lemma:forgetting1}\ref{item:forgettting1ii}, we then obtain $\pi[i\to S]\models\varphi$.

		For the opposite direction, assume that $\pi[i\to S]\models\varphi$ for some $S\in\pi|_{-i}$. Define $\vec b=b_1,\dots,b_{n-1}$  as the Boolean vector of length~$n-1$ such that
		for every $1\le k\le n-1$,
		\[
			b_k	
			\mathwordbox{=}{==}
			\begin{cases}
				\top 	&	 \text{if $j\in S\cup\set i$}\\
				\bot	&	 \text{otherwise.}
			\end{cases}
		\]
		Then, clearly, $S=\set{j\midd ij\hatsub=\top}$. By Lemma~\ref{lemma:forgetting1}\ref{item:forgettting1ii}, it follows that $\pi\models\varphi\hatsub$. We may conclude that
		$\pi\models\hat\exists i\,\varphi$. 
	\end{proof}
	It is important to note, however, that the number of Boolean vectors of length~$k$ is exponential in~$k$. Accordingly, $\hat\exists i\,\varphi$  abbreviates a formula whose size is exponential in the size of~$\varphi$.

	\paragraph{Characterising group deviations}\label{subsection:characterising_group_diviations}

	\bphnote{The following lemma is useful for characterising both individual rationality and blocking coalitions. Additional text and an example should still be supplied.}

	Besides a single player deviating from its coalition and joining another, multiple players (from possibly different coalitions) could also deviate together and form a coalition of their own. This concept lies at the basis of, e.g., the core stability concept. We establish a formal connection between substitution and group deviations.

	Let~$T=\set{i_1,\dots,i_t}$ be a group of players.
	Observe that $|\vars_T|=\binom{n}{2}-\binom{n-t}{2}$ and let $\vec{\imath\jmath}_T$ be a fixed enumeration of $\vars_T$. By the \emph{$T$-separating Boolean vector} (given~$\vec{\imath\jmath}_T$) we define as the unique Boolean vector~$\vec b_T$ of length~$\binom{n}{2}-\binom{n-t}{2}$ such that for all~$i\in T$ and all~$j\in N$,
	\[
		ij_{\vec{\imath\jmath}_T\leftarrow\vec b_T}	\mathwordbox{=}{==}
		\begin{cases}
			\top &	\text{if $j\in T$,}\\
			\bot &	\text{otherwise.}
		\end{cases}
	\]
	Intuitively,~$\vec b_T$ represents the choice of group~$T$ to form a coalition of their own.
	Whenever~$T$ is clear from the context we omit the subscript in $\vec b_T$ and $\vec{\imath\jmath}_T$. The following characterisation now holds.

	\begin{lemma}\label{lemma:group_deviation}
		Let $(N,\gamma_1,\dots,\gamma_n)$ be a Boolean hedonic game,~$T$ a group of players,~$\pi$ a partition,~$\vec{\imath\jmath}$ a fixed enumeration of~$\vars_T$, and~$\vec b_T$ the corresponding $T$-separating Boolean vector. Then, for every formula~$\varphi\in L_N$,
		\[
		\text{$\pi\models\varphi_{\vec{\imath\jmath}\leftarrow\vec b_T} $ \wordbox{ if and only if }{ iif and only iff } $\pi[T\to\emptyset]\models\varphi$.}
		\]
	\end{lemma}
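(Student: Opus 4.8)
The plan is to prove the equivalence by structural induction on $\varphi$, with the atomic case carrying all the content and the inductive steps being immediate from the facts that uniform substitution commutes with $\neg$ and $\vee$ and that $\models$ is defined compositionally. Note first that the hypothesis that the $\gamma_i$ specify a hedonic game plays no role here; only the player set $N$ matters, and $\pi[T\to\emptyset]$ is, by the definitions in the preliminaries, the partition $\pi|_{-T}\cup\set{T,\emptyset}$, so its coalition of any $i\in T$ is $T$ and its coalition of any $j\notin T$ is $\pi(j)\setminus T$.

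For the base case, take $\varphi=k\ell$ with $k\neq\ell$, and distinguish three cases by how many of $k,\ell$ lie in $T$. If $k,\ell\in T$, then $k\ell\in\vars_T$ and $(k\ell)_{\vec{\imath\jmath}\leftarrow\vec b_T}=\top$ by definition of the $T$-separating Boolean vector, while $\pi[T\to\emptyset]$ places both $k$ and $\ell$ in $T$, so $\pi[T\to\emptyset]\models k\ell$; both sides hold. If exactly one of $k,\ell$ lies in $T$, say $k\in T$ and $\ell\notin T$, then $k\ell\in\vars_T$ and $(k\ell)_{\vec{\imath\jmath}\leftarrow\vec b_T}=\bot$ (here one should observe that $\vec b_T$ is consistently defined on the variable $k\ell$ no matter which endpoint is named first, since the defining clause ranges over $i\in T$), while in $\pi[T\to\emptyset]$ the coalition of $k$ is $T$ and that of $\ell$ is $\pi(\ell)\setminus T$, which is disjoint from $T$; so $\pi[T\to\emptyset]\not\models k\ell$ and both sides fail. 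Finally, if $k,\ell\notin T$, then $k\ell\notin\vars_T$, so $(k\ell)_{\vec{\imath\jmath}\leftarrow\vec b_T}=k\ell$ is untouched by the substitution, and we must check that $\pi\models k\ell$ iff $\pi[T\to\emptyset]\models k\ell$, i.e. that $\pi(k)=\pi(\ell)$ iff $\pi(k)\setminus T=\pi(\ell)\setminus T$. The forward implication is trivial; for the converse, if $\pi(k)\neq\pi(\ell)$ then these coalitions are disjoint, hence so are $\pi(k)\setminus T$ and $\pi(\ell)\setminus T$, and the former is nonempty since $k\in\pi(k)\setminus T$, so they cannot coincide.

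For the inductive step, if $\varphi=\neg\psi$ then $\varphi_{\vec{\imath\jmath}\leftarrow\vec b_T}=\neg(\psi_{\vec{\imath\jmath}\leftarrow\vec b_T})$, and $\pi\models\neg(\psi_{\vec{\imath\jmath}\leftarrow\vec b_T})$ iff $\pi\not\models\psi_{\vec{\imath\jmath}\leftarrow\vec b_T}$ iff, by the induction hypothesis, $\pi[T\to\emptyset]\not\models\psi$ iff $\pi[T\to\emptyset]\models\neg\psi$. The case $\varphi=\psi_1\vee\psi_2$ is analogous, using that substitution distributes over $\vee$ and that $\pi\models\chi_1\vee\chi_2$ iff $\pi\models\chi_1$ or $\pi\models\chi_2$. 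The only place where care is needed is the bookkeeping in the atomic case — in particular verifying well-definedness of the $T$-separating Boolean vector on each variable of $\vars_T$ and the disjointness argument in the third subcase; everything else is mechanical.
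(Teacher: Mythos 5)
Your proof is correct, and it follows essentially the approach the paper intends: the paper omits an explicit proof of Lemma~\ref{lemma:group_deviation}, but for the analogous single-player statement (Lemma~\ref{lemma:forgetting1}\ref{item:forgettting1ii}) it likewise appeals to structural induction on~$\varphi$, with the atomic case doing the work. Your case analysis on how many endpoints of the variable lie in~$T$, the well-definedness remark about~$\vec b_T$ on unordered pairs, and the disjointness argument for the case $k,\ell\notin T$ are exactly the bookkeeping needed, and the observation that no transitivity side-condition is required here (unlike in Lemma~\ref{lemma:forgetting1}, since $T$ may always form a coalition of its own) is accurate.
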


		\section{{Characterising Solutions}}
	  Our task in this section is to show how the various solution
	  concepts we introduced above can be \emph{characterised} as formulas
	  of our propositional language.  Let~$f$ be a function mapping each Boolean hedonic game~$G$ for~$N$ to a formula~$f(G)$ of~$L_N$. Given a solution concept~$\theta$,
	  we say that~$f$ is a {\em
	    characterisation} of~$\theta$ if for every Boolean hedonic game~$G$ on~$N$
	  and every partition $\pi$, we have that~$\pi$ is a solution
	  according to~$\theta$ for game~$G$ if and only if $\pi \models
	  f(G)$. If, furthermore, there exists a polynomial $p$ such that
	  $|f(G)| \leq p(|N|)$, then $f$ is a {\em polynomial
	    characterisation} of $\theta$.

	\hanote{This definitions above can feature more prominently perhaps in definition environment.}

	Once we have a characterisation of $\theta$, we know that there is a one-to-one correspondence between the partitions of $N$ satisfying $\theta$ and the models of $f(G)$. Therefore, given a Boolean hedonic game~$G$: 
	\begin{itemize}
	\item checking whether there exists a partition satisfying $\theta$ in~$G$ amounts to checking whether~$f(G)$ is satisfiable; 
	\item computing a partition satisfying~$\theta$ in~$G$ amounts to finding a model of~$f(G)$; 
	\item computing all partitions satisfying~$\theta$ in~$G$ amounts to finding all models of~$f(G)$.
	\end{itemize}

	Thus, once we have a characterisation of a solution concept, one can
	use a SAT solver to find (some or all) or to check the existence of partitions that satisfy it.
	  This carries over to {\em conjunctions} of solution
	concepts. For instance, if individual rationality is characterised by
	$f_{\mathit{IR}}$ and envy-freeness by $f_{\mathit{EF}}$, the there is
	a one-to-one correspondence between the individual rational envy-free
	partitions for~$G$ and the models of $f_{\mathit{IR}}(G) \wedge
	f_{\mathit{EF}}(G)$. More generally, these techniques can be used for finding or checking partitions satisfying~$\theta$ that also have certain other properties expressible in~$L_N$.

In the remainder of the section we focus on how a number of classical solution concepts, and see how they can be characterised in our logic. 

		\paragraph{Individual rationality, perfection, and  optimality}\label{ir}
	
		Recall that a partition is individually rational if any player is at
	  least as happy in her coalition as being alone, that is, no player
	  would prefer to leave her coalition to form a singleton
	  coalition. Now we have the following characterisation of individual rationality in our logic.

	\bphnote{The definition of the characterising formula could perhaps be presented in the main text, rendering the formulation of the proposition shorter and clearer.  \textbf{The important thing is the definition of the single Boolean vector $\vec b=(\bot,\dots,\bot)$}.}

		\begin{proposition}\label{proposition:individual_rationality}
		Let	$(N,\gamma_1,\dots,\gamma_n)$ be a Boolean hedonic game, let~$i$ be a player with goal~$\gamma$, and let $\pi$ be a partition. Let, furthermore, 
		$i\vec\jmath$ be a fixed enumeration of~$\vars_i$ and let
		$\vec b=\bot,\dots,\bot$ be the Boolean vector of length~$n-1$ only containing~$\bot$. Then,
			\begin{enumerate}[label=\ensuremath{(\roman*)},leftmargin=2.5em,itemsep=.5ex]
				\item  
			\text{$\pi$ is acceptable to~$i$}
			iff
			$\pi\models(\gamma_i)\hatsub\to \gamma_i$,
			\item
			\scalebox{1}[1]{$\pi$ is individually rational iff $\displaystyle\pi\models\bigwedge_{i\in N}\big((\gamma_i)\hatsub\to \gamma_i\big)$.}
	\end{enumerate}
	\end{proposition}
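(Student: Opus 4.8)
The plan is to derive part~(i) from \lemref{lemma:forgetting1} and then read off part~(ii) at once. First I would unwind the meaning of acceptability in a Boolean hedonic game. By definition $\pi$ is acceptable to~$i$ iff $\pi(i)\mathrel{R_i}\set i$, and since $(\pi[i\to\emptyset])(i)=\set i$ this is the same as $\pi\mathrel{R_i}\pi[i\to\emptyset]$ (the equivalent formulation of individual rationality recorded when the solution concepts were introduced). Since player~$i$'s preferences are the dichotomous preferences induced by the goal~$\gamma_i$ — so that $i$ weakly prefers one partition to another exactly when satisfaction of~$\gamma_i$ at the latter entails satisfaction of~$\gamma_i$ at the former — this gives
\[
  \text{$\pi$ is acceptable to $i$}\quad\Longleftrightarrow\quad\bigl(\pi[i\to\emptyset]\models\gamma_i\ \text{ implies }\ \pi\models\gamma_i\bigr)\text.
\]

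The crux is to rewrite the antecedent $\pi[i\to\emptyset]\models\gamma_i$ as a substitution instance. I would apply \lemref{lemma:forgetting1} with the given enumeration $i\vec\jmath$ of~$\vars_i$ and the all-$\bot$ vector $\vec b=\bot,\dots,\bot$ of length~$n-1$. With this choice the group $B=\set{j\midd ij\hatsub=\top}$ is empty, hence $\pi[i\to B]=\pi[i\to\emptyset]$. Since, by our standing convention, the empty set belongs to every partition, $\emptyset\in\pi|_{-i}$, so \lemref{lemma:forgetting1}\ref{item:forgettting1i} yields $\pi\models\trans\hatsub$. Part~\ref{item:forgettting1ii} of the same lemma, applied with $\varphi=\gamma_i$, then gives that $\pi[i\to\emptyset]\models\gamma_i$ iff $\pi\models(\gamma_i\wedge\trans)\hatsub$. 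Because uniform substitution commutes with the Boolean connectives, $(\gamma_i\wedge\trans)\hatsub=(\gamma_i)\hatsub\wedge\trans\hatsub$, and since $\pi\models\trans\hatsub$ the second conjunct may be dropped: $\pi[i\to\emptyset]\models\gamma_i$ iff $\pi\models(\gamma_i)\hatsub$.

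Chaining the two equivalences, $\pi$ is acceptable to~$i$ iff $\pi\models(\gamma_i)\hatsub$ implies $\pi\models\gamma_i$, which by the truth condition for~$\to$ is exactly $\pi\models(\gamma_i)\hatsub\to\gamma_i$; this proves~(i). For~(ii), a partition~$\pi$ is individually rational iff $\pi(i)\mathrel{R_i}\set i$ for every player~$i\in N$, that is, iff $\pi$ is acceptable to each~$i$. Applying~(i) to every player~$i$ (each with its own fixed enumeration of~$\vars_i$), and using that a partition satisfies a conjunction iff it satisfies every conjunct, we conclude that $\pi$ is individually rational iff $\pi\models\bigwedge_{i\in N}\big((\gamma_i)\hatsub\to\gamma_i\big)$.

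The only step that is more than bookkeeping is the elimination of the transitivity conjunct in the middle paragraph: the reason $\vec b=\bot,\dots,\bot$ is the ``right'' vector, and the reason $\pi\models\trans\hatsub$ holds unconditionally, is precisely the convention that $\emptyset\in\pi$ for every partition, so that the deviation $\pi[i\to\emptyset]$ — player~$i$ going off on her own — is always available. Everything else reduces to \lemref{lemma:forgetting1} and propositional semantics.
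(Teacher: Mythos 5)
Your proof is correct and follows essentially the same route as the paper's: unfold acceptability as $\pi\mathrel{R_i}\pi[i\to\emptyset]$, apply \lemref{lemma:forgetting1}\ref{item:forgettting1ii} with the all-$\bot$ vector, and read off~(ii) as an immediate consequence of~(i). You even make explicit a detail the paper's chain of equivalences glosses over, namely why the $\trans\hatsub$ conjunct can be dropped (via \lemref{lemma:forgetting1}\ref{item:forgettting1i} and the convention that $\emptyset\in\pi|_{-i}$).
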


	\begin{proof}
		We only give the proof for~$(i)$, as~$(ii)$ follows as an immediate consequence. 
			{
		 For~$(i)$, merely consider the following equivalences, of which the third one follows from  Lemma~\ref{lemma:forgetting1}\ref{item:forgettting1ii}.
		\[
		\renewcommand{\arraystretch}{1.3}
			\begin{array}{lll}
					\multicolumn{1}{l}{\text{$\pi$ is acceptable to~$i$}}	
					&	\text{iff}	&	\text{$\pi\mathrel{R_i}\pi[i\to\emptyset]$}\\
					&	\text{iff}	&	\text{$\pi[i\to\emptyset]\models\gamma_i$ implies $\pi\models\gamma_i$}\\
					&	\text{iff}	&	\text{$\pi\models(\gamma_i)\hatsub$ implies $\pi\models\gamma_i$}\\				
					&	\text{iff}	&	\text{$\pi\models(\gamma_i)\hatsub\to \gamma_i$.}	
			\end{array}
		\]
		This concludes the proof.
		}
	\end{proof}
	To illustrate Proposition~\ref{proposition:individual_rationality} we consider again \exref{ex1}.
	{
	\setcounter{example}{0}
	\begin{example}[continued]
	In the game of our example, all partitions are acceptable to player~$1$, whose goal is given by $\gamma_1=123 \vee 124 \vee 134$.
 Let~$\vars_1$ be enumerated by $1\vec\jmath=12,13,14$ and let $\vec b=\bot,\bot,\bot$. Then, $(\gamma_2)_{12,13,14\leftarrow\bot,\bot,\bot}$ is $\TRANS$-equivalent to~$\bot$ and, hence, $\pi\models(\gamma_2)_{12,13,14\leftarrow\bot,\bot,\bot}\to\gamma_1$ for all partitions~$\pi$.
	According to Proposition~\ref{proposition:individual_rationality} this signifies that to player~$1$ every partition is acceptable.

	Now consider player~$4$, whose goal is given by $\neg423$, that is, by $\neg(42\wedge 43)$. Let~$\vars_4$ be enumerated by $41,42,43$ and let $\vec b=\bot,\bot,\bot$.
	Then, $\neg(42\wedge 43)_{41,42,43\leftarrow\bot,\bot,\bot}=\neg(\bot\wedge\bot)$, which is obviously $\TRANS$-equivalent to~$\top$.
	 Hence,
	\[
		\text{$\pi\models \neg(42\wedge 43)_{41,42,43\leftarrow\bot,\bot,\bot}$ \wordbox{if and only if}{if and only if} $\pi\models \neg(42\wedge 43)$,}
	\]
	meaning that a partition~$\pi$ is acceptable to  player~$4$ if and only if~$\pi$ satisfies his goal.
	\end{example}	
	}

%

		The logical characterisation of perfect perfect partition is immediate, as witnessed by the following proposition.

		\begin{proposition} Let	$(N,\gamma_1,\dots,\gamma_n)$ be a Boolean hedonic game. Then,
		a partition~$\pi$ is perfect \wordbox{if and only if}{ if and only if } $\displaystyle\pi \models \bigwedge_{i\in N} \gamma_i\text.$
		\end{proposition}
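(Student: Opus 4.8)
The plan is to obtain the equivalence by simply unfolding the definition of a perfect partition together with the semantics of the dichotomous preference relation that a goal formula encodes. Recall that $\pi$ is perfect exactly when every player $i$ is in one of her most preferred coalitions, equivalently when $\pi \mathrel{R_i} \pi'$ holds for every partition $\pi'$. In a Boolean hedonic game each $R_i$ is dichotomous, the satisfactory partitions (those in $\mathscr N_i^+$) forming the top indifference class; and by construction $\pi \models \gamma_i$ holds precisely when $\pi$ is satisfactory for $i$. The only subtlety worth a line is the degenerate situation in which some $\gamma_i$ is unsatisfiable, in which case player $i$ is indifferent among all partitions; I would assume throughout that goals are satisfiable (or note that the statement is to be read modulo this case).

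First I would prove the ``if'' direction. Assume $\pi \models \bigwedge_{i\in N}\gamma_i$. Then $\pi \models \gamma_i$, so $\pi$ lies in the top indifference class of $R_i$, for every player $i$; hence $\pi \mathrel{R_i} \pi'$ for every partition $\pi'$, so each player is in a most preferred coalition and $\pi$ is perfect.

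Then I would prove the ``only if'' direction. Assume $\pi$ is perfect and fix a player $i$. Since $\gamma_i$ is satisfiable there is a partition $\pi'$ with $\pi' \models \gamma_i$, \ie $\pi'$ is satisfactory for $i$. Perfection gives $\pi \mathrel{R_i} \pi'$; as preferences are dichotomous and $\pi'$ is in the top class, $\pi$ must be in the top class as well, for otherwise $\pi' \mathrel{P_i} \pi$, contradicting $\pi \mathrel{R_i} \pi'$. Thus $\pi$ is satisfactory for $i$, \ie $\pi \models \gamma_i$. Since $i$ was arbitrary, $\pi \models \bigwedge_{i\in N}\gamma_i$.

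I do not expect any genuine obstacle here: as the authors remark, the characterisation is immediate, being two applications of the definition of perfection together with the observation that for dichotomous preferences ``satisfactory'' and ``most preferred'' coincide. The only place warranting care is the bookkeeping around players with unsatisfiable goals.
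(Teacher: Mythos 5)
Your proof is correct and takes essentially the same route as the paper, which states the characterisation without proof as ``immediate'': one simply unfolds the definition of a perfect partition and uses the fact that for dichotomous preferences the satisfactory coalitions (those where $\pi\models\gamma_i$) are exactly the most preferred ones. Your caveat about players whose goals are unsatisfiable over partitions is a reasonable observation about a degenerate case that the paper silently glosses over; handling it by assumption (or by reading ``most preferred'' as ``satisfactory'') is fine.
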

		As a consequence, a perfect partition exists if and
	  only if the formula $\mathit{trans}\wedge \bigwedge_{i\in N} \gamma_i$ is
	  satisfiable. Moreover, finding a social welfare maximising partition reduces to
	  finding valuation satisfying a maximum number of formulas~$\gamma_i\wedge\trans$, that is,
	  to solving a {\sc maxsat} problem.
	  
	  Leveraging the same idea of iteratively checking whether a perfect partition can be found for a subset of agents, one can compute Pareto optimal solutions for a given game. A subset~$\Psi$ of formulas is said to be a \emph{maximal trans-consistent} if both
	  \begin{enumerate}[label=$(\roman*)$] 
	  \item $\Psi\cup\set{\trans}$ is consistent, and 
	  \item $\Psi'\cup\set{\trans}$ is inconsistent for all sets of formulas~$\Psi'$ with~$\Psi\subsetneq\Psi'$.
	  \end{enumerate}
We now have the following proposition.
	  \begin{proposition}
	  A partition~$\pi$ of a Boolean hedonic game is Pareto optimal if and only if $\set{\gamma_i \midd \pi \models \gamma_i}$ is a maximal $trans$-consistent subset of $\{\gamma_1, \ldots, \gamma_n\}$
	  \end{proposition}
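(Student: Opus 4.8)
The plan is to unwind both directions using the definitions of Pareto optimality and of maximal $\trans$-consistency, with the crucial technical fact being that $\set{\gamma_i \midd \pi\models\gamma_i}\cup\set\trans$ is always consistent, and that its consistency is in fact \emph{equivalent} to the existence of a partition realizing exactly that set of satisfied goals. First I would record the easy observation that for any partition $\pi$, the valuation induced by $\pi$ satisfies $\trans$, and satisfies precisely the formulas $\gamma_i$ with $\pi\models\gamma_i$; hence $\set{\gamma_i\midd\pi\models\gamma_i}\cup\set\trans$ is always consistent, so clause $(i)$ in the definition of maximal $\trans$-consistency holds automatically. This means the whole statement reduces to: $\pi$ is Pareto optimal iff there is no set of goals $\Psi'$ properly containing $\set{\gamma_i\midd\pi\models\gamma_i}$ with $\Psi'\cup\set\trans$ consistent.

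Next I would establish the bridge between ``consistency with $\trans$'' and ``realizability by a partition.'' For a subset $\Psi'\subseteq\set{\gamma_1,\dots,\gamma_n}$, the set $\Psi'\cup\set\trans$ is consistent iff it has a model, i.e.\ a valuation satisfying $\trans$ and every formula in $\Psi'$; and by the correspondence between models of $\trans$ and partitions (which is exactly the content of the Lindenbaum/equivalence-relation construction in the proof of Theorem~\ref{thm:completeness}), such a valuation corresponds to a partition $\pi'$ with $\pi'\models\gamma_i$ for every $\gamma_i\in\Psi'$. Writing $\mathrm{Sat}(\pi)=\set{i\in N\midd \pi(i)\in\mathscr N_i^+}$, this gives: there exists $\Psi'\supsetneq\set{\gamma_i\midd\pi\models\gamma_i}$ with $\Psi'\cup\set\trans$ consistent iff there exists a partition $\pi'$ with $\mathrm{Sat}(\pi')\supseteq$ the index set of $\set{\gamma_i\midd\pi\models\gamma_i}$ and the inclusion strict. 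Care is needed here because the $\gamma_i$ are formulas, not indices, and two players could in principle share a goal formula; I would handle this by phrasing maximality in terms of the \emph{set} of formulas $\set{\gamma_i\midd\pi\models\gamma_i}$ exactly as the statement does, and noting that $\set{\gamma_i\midd\pi\models\gamma_i}\subsetneq\set{\gamma_i\midd\pi'\models\gamma_i}$ as sets of formulas corresponds to the players-satisfied comparison up to the identification of players with their goals — or, cleanly, observe that whenever $\mathrm{Sat}(\pi)\subsetneq\mathrm{Sat}(\pi')$ the formula-sets are also in strict inclusion, and conversely a strict inclusion of formula-sets yields a partition Pareto-dominating $\pi$.

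With these pieces in place, the two implications are routine. For the forward direction, suppose $\set{\gamma_i\midd\pi\models\gamma_i}$ is not maximal $\trans$-consistent; since clause $(i)$ always holds, clause $(ii)$ must fail, so some $\Psi'\supsetneq\set{\gamma_i\midd\pi\models\gamma_i}$ has $\Psi'\cup\set\trans$ consistent, hence is realized by a partition $\pi'$ with strictly more satisfied goals, witnessing that $\pi$ is not Pareto optimal. For the converse, if $\pi$ is not Pareto optimal, take $\pi'$ with $\mathrm{Sat}(\pi)\subsetneq\mathrm{Sat}(\pi')$; then $\set{\gamma_i\midd\pi'\models\gamma_i}$ is a $\trans$-consistent set of goals properly extending $\set{\gamma_i\midd\pi\models\gamma_i}$, so clause $(ii)$ fails and the set is not maximal $\trans$-consistent. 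The main obstacle, as flagged above, is purely bookkeeping: making the translation between ``set of satisfied goal formulas'' and ``set of players in a satisfactory coalition'' precise when goals may coincide, and making sure that in the definition of maximal $\trans$-consistency the competitor family $\Psi'$ can be taken to be a subset of $\set{\gamma_1,\dots,\gamma_n}$ rather than an arbitrary set of formulas (which follows since any consistent-with-$\trans$ superset, when intersected with $\set{\gamma_1,\dots,\gamma_n}$, still properly contains the original set whenever the original was not already all of the satisfied goals — and only such subsets can arise from partitions).
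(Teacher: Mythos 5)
Your proof is correct, and the paper in fact states this proposition without any proof, so there is no official argument to diverge from; your route — observing that $\set{\gamma_i \midd \pi\models\gamma_i}\cup\set{\trans}$ is always satisfied by $\pi$ itself, and that models of $\trans$ correspond exactly to partitions (the equivalence-relation construction from the completeness proof), so that $\trans$-consistent subsets of the goals are precisely those realizable by some partition — is exactly the intended one suggested by the surrounding text. Your care over the two bookkeeping points (goals that coincide syntactically, and reading clause $(ii)$ of maximal $\trans$-consistency as ranging over subsets of $\set{\gamma_1,\dots,\gamma_n}$) is warranted, since the paper's definition is loose on the latter, and your resolution of both is sound.
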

Algorithms for computing maximal consistent subsets are well-known and could thus be exploited for the computation of Pareto optimal partitions.
	  %




	\bphnote{Social welfare maximisation in the context of dichotomous or Boolean games seems to be equivalent to Pareto optimality. It seems to me that it is hard to characterise Pareto efficiency in a general way without ``quantifying'' over all partitions.}
	\jlnote{Yes, it is possible in polynomial space using cardinality formulas (which requires additional propositional symbols in the language). Probably we should not do this here, but in the future long version of the paper.} 

		\paragraph{Nash stability}

	Recall that a partition~$\pi$ is Nash stable, if no player~$i$ wishes to leave his coalition~$\pi(i)$ and join another (possibly empty) coalition so as to satisfy his goal. Leveraging our results from Section~\ref{section:substition_and_deviation}, we obtain the following characterisation of this fundamental solution concept.

	\begin{proposition}
		Let $(N,\gamma_1,\dots,\gamma_n)$ be a Boolean hedonic game and~$\pi$ a partition. Then,
		\[
			\text{$\pi$ is Nash stable}
			\wordbox{ if and only if }{ iif and only iff }
			\text{$\pi\models\bigwedge_{i\in N}\big((\hat\exists i\,\gamma_i)\to\gamma_i\big)$.}
		\] 
	\end{proposition}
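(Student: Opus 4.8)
The plan is to unwind both sides of the claimed equivalence into the same statement about single-player deviations, and then invoke \propref{proposition:quantification}. First I would reduce Nash stability to a condition quantifying over $\pi|_{-i}$. By definition $\pi$ is Nash stable iff $\pi(i)\mathrel{R_i}S\cup\{i\}$ for all $i\in N$ and all $S\in\pi$, which — as observed right after the definition of Nash stability — is the same as $\pi\mathrel{R_i}\pi[i\to S]$ for all such $i$ and $S$. The bookkeeping step is to check that, as $S$ ranges over $\pi$, the partition $\pi[i\to S]$ ranges exactly over $\{\pi[i\to S']:S'\in\pi|_{-i}\}$: if $i\in S$ then $S=\pi(i)$ and $\pi[i\to S]=\pi=\pi[i\to\pi(i)\setminus\{i\}]$ with $\pi(i)\setminus\{i\}\in\pi|_{-i}$; if $i\notin S$ then $S\in\pi|_{-i}$ already; and the ``staying on one's own'' alternative is subsumed, since $\emptyset\in\pi$ implies $\emptyset\in\pi|_{-i}$ and $\pi[i\to\emptyset]$ is the corresponding partition. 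Hence $\pi$ is Nash stable iff $\pi\mathrel{R_i}\pi[i\to S]$ for every $i\in N$ and every $S\in\pi|_{-i}$.

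Next I would substitute the semantics of the Boolean hedonic game: by the definition of the hedonic game represented by $(N,\gamma_1,\dots,\gamma_n)$ (the same reading used in the proof of \propref{proposition:individual_rationality}), $\pi\mathrel{R_i}\pi[i\to S]$ holds iff $\pi[i\to S]\models\gamma_i$ implies $\pi\models\gamma_i$. So Nash stability of $\pi$ becomes: for every $i\in N$ and every $S\in\pi|_{-i}$, $\pi[i\to S]\models\gamma_i$ implies $\pi\models\gamma_i$. Since the consequent $\pi\models\gamma_i$ does not depend on $S$, the universal quantifier over $S$ can be moved across the implication, yielding the equivalent statement: for every $i\in N$, if $\pi[i\to S]\models\gamma_i$ for some $S\in\pi|_{-i}$, then $\pi\models\gamma_i$.

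Finally I would apply \propref{proposition:quantification} with $\varphi=\gamma_i$: the antecedent ``$\pi[i\to S]\models\gamma_i$ for some $S\in\pi|_{-i}$'' is precisely ``$\pi\models\hat\exists i\,\gamma_i$''. Substituting, $\pi$ is Nash stable iff for every $i\in N$, $\pi\models\hat\exists i\,\gamma_i$ implies $\pi\models\gamma_i$, i.e.\ $\pi\models(\hat\exists i\,\gamma_i)\to\gamma_i$, i.e.\ $\pi\models\bigwedge_{i\in N}\big((\hat\exists i\,\gamma_i)\to\gamma_i\big)$, as required. There is no real obstacle here: the argument is a chain of logical equivalences resting on \propref{proposition:quantification}, and the only point that needs care is the first step — verifying that the Nash deviations available to player $i$ (joining some $S\cup\{i\}$ with $S\in\pi$, or staying alone) correspond exactly to the partitions $\pi[i\to S]$ with $S\in\pi|_{-i}$, including the degenerate choice $S=\pi(i)\setminus\{i\}$ which reproduces $\pi$ and the choice $S=\emptyset$ which is staying on one's own.
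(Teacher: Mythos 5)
Your proof is correct and takes essentially the same route as the paper's: rewrite Nash stability as the condition over all $S\in\pi|_{-i}$, translate $\mathrel{R_i}$ via the goals $\gamma_i$, move the quantifier over $S$ across the implication (legitimate since $\pi\models\gamma_i$ does not depend on $S$), and then invoke \propref{proposition:quantification}. The only difference is that you make explicit the bookkeeping identifying the deviations in the definition (joining some $S\in\pi$, including $S=\emptyset$ and the degenerate $S=\pi(i)$) with the sets in $\pi|_{-i}$, a step the paper absorbs without comment into its first equivalence.
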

	\begin{proof}
		Consider an arbitrary player~$i$ and observe that following equivalences hold. The fourth equivalence holds in virtue of Proposition~\ref{proposition:quantification}. The third one is a standard law of logic: merely observe that $\pi\models\gamma_i$ is not dependent on~$S$.
		\[
		\renewcommand{\arraystretch}{1.3}
			\begin{array}{l@{\;\;}l}
					\multicolumn{2}{l}{\text{$\pi$ is Nash stable}}	\\
					\text{iff}	&	\text{for all $i\in N$ and $S\in\pi|_{-i}$: $\pi\mathrel{R_i}\pi[i\to S]$}\\
					\text{iff}	&	\text{for all $i\in N$ and $S\in\pi|_{-i}$: if $\pi[i\to S]\models\gamma_i$ then $\pi\models\gamma_i$}							 	\\
					\text{iff}	&	\scalebox{1}[1]{\text{for all $i\in N$: if $\pi[i\to S]\models\gamma_i$ for some $S\in\pi|_{-i}$ then $\pi\models\gamma_i$}}							 	\\
					\text{iff}	&	\text{for all $i\in N$: if $\pi\models\hat\exists i\,\gamma_i$ then $\pi\models\gamma_i$}\\
					\text{iff}	&	\text{for all $i\in N$: $\pi\models(\hat\exists i\,\gamma_i)\to\gamma_i$}							 									\\	
					\text{iff}	&	\text{$\pi\models\bigwedge_{i\in N}\big((\hat\exists i\,\gamma_i)\to\gamma_i\big)$}							 												 							 	
			\end{array}
		\]
		This concludes the proof.
	\end{proof}
	Our running example illustrates this result.
	{
	\setcounter{example}{0}
	\begin{example}[continued]
		Consider again the game of Example~\ref{ex1}. Partition~$[123|4]$ satisfies each player's goal and, consequently, is Nash stable. We also have that
$
	[123|4]\models\gamma_1\wedge\gamma_2\wedge\gamma_3\wedge\gamma_4
$
and, thus,
\[
	[123|4]\models\bigwedge_{i\in N}\big((\hat\exists i\,\gamma_i)\to\gamma_i\big)\text.
\]
Now recall that for partition~$\pi=[1|23|4]$ player~$2$'s goal is not satisfied and that she cannot deviate and join another coalition to make this happen. 
		In this case, $\pi|_{-2}=\set{\set 1,\set 3,\set 4}$. Moreover, $\pi[2\to\set 1]=[12|3|4]$, $\pi[2\to\set 3]=[1|23|4]$, and $\pi[2\to\set 4]=[1|3|24]$. 
		Since,
	$
					[12|3|4] 	\not\models\gamma_2$,
	$				[1|23|4] 	\not\models\gamma_2$, and
	$				[1|3|24] 	\not\models\gamma_2$,	
		it follows that $\pi\not\models\hat\exists 2\,\gamma_2$. Hence, $\pi\models(\hat\exists 2\,\gamma_2)\to\gamma_2$.
		Player~$1$, however, could deviate from~$\pi_2$ and join~$\set{2,3}$ and thus have his goal satisfied. Thus, $\pi$ is not Nash stable.
		Now observe that $\set{2,3}\in\pi|_{-1}$ and that $\pi[1\to\set{2,3}]=[123|4]$. Moreover,
		$[123|4]\models\gamma_1$. As thus $\pi\models\hat\exists1\,\gamma_1$, also $\pi\not\models(\hat\exists1\,\gamma_1)\to\gamma_1$. We may
		conclude that 
		\[
			[1|23|4]\not\models\bigwedge_{i\in N}\big((\hat\exists i\,\gamma_i)\to\gamma_i\big)\text.
		\]
		 	\end{example}
		}
	Nash stable partitions are not guaranteed to exist in Boolean hedonic games. The two-player game $(\set{1,2},12,\neg 21)$ witnesses this fact, as can easily be appreciated. The translation into a SAT instance gives us a way to compute all Nash stable partitions of a given Boolean hedonic game. Recall, however, that the size of~$\hat\exists i\,\gamma_i$ is generally exponential in the size of~$\gamma_i$.

		\paragraph{Core and strict core stability}

		\bphnote{The above Lemma and Proposition seem to have been commented out...}

	Core and strict core stability relate to group deviations much in the same way as Nash stability relates to individual deviations.
Group deviations we characterised in Section~\ref{subsection:characterising_group_diviations}. 
We thus find that Lemma~\ref{lemma:group_deviation} yields a straightforward  characterisation in our logic of a specific group blocking or weakly blocking a given partition.

	\begin{proposition}\label{proposition:blocking_coalition}
	Let	$(N,\gamma_1,\dots,\gamma_n)$ be a Boolean hedonic game and~$T$ a group of players, and $\pi$ be a partition. Let, furthermore, 
	$\vec{\imath\jmath}$ a fixed enumeration of~$\vars_T$ and  
	$\vec b$ \emph{the corresponding $T$-separating Boolean vector}. Then,
	\begin{enumerate}[label=\ensuremath{(\roman*)},leftmargin=2.5em,itemsep=.5ex]
		\item \text{
			\text{$T$ blocks~$\pi$}
			\wordbox{ if and only if }{ iif and only iff }
			$\pi\models\displaystyle\bigwedge_{i\in T}\big(\neg\gamma_i\wedge(\gamma_i)\hatsubb\big)$,
		}
		\item $T$ weakly blocks~$\pi$ if and only if
	\[
		\pi\models\bigwedge_{j\in T}\big(\gamma_j\to(\gamma_j)\hatsubb\big)\wedge\bigvee_{i\in T}\big(\neg\gamma_i\wedge(\gamma_i)\hatsubb\big)\text.
	\]
	\end{enumerate}
	\end{proposition}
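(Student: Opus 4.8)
The plan is to unpack the two definitions of blocking and weak blocking from the Preliminaries and translate each occurrence of "$\pi[T\to\emptyset]\models\gamma_i$" using Lemma~\ref{lemma:group_deviation}. Both parts are essentially immediate given that lemma; the only content is rewriting the logical structure faithfully.

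For part $(i)$, recall that $T$ blocks $\pi$ exactly when $\pi[T\to\emptyset]\mathrel{P_i}\pi$ for all $i\in T$. Since a Boolean hedonic game has dichotomous preferences, $\pi'\mathrel{P_i}\pi$ holds if and only if $\pi'\models\gamma_i$ and $\pi\not\models\gamma_i$. So $T$ blocks $\pi$ if and only if, for every $i\in T$, we have $\pi[T\to\emptyset]\models\gamma_i$ and $\pi\not\models\gamma_i$. Now apply Lemma~\ref{lemma:group_deviation} with the fixed enumeration $\vec{\imath\jmath}$ of $\vars_T$ and the corresponding $T$-separating Boolean vector $\vec b$: the condition $\pi[T\to\emptyset]\models\gamma_i$ is equivalent to $\pi\models(\gamma_i)\hatsubb$. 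Hence $T$ blocks $\pi$ iff $\pi\models\bigwedge_{i\in T}\big(\neg\gamma_i\wedge(\gamma_i)\hatsubb\big)$, which is the claimed characterisation.

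For part $(ii)$, recall that $T$ weakly blocks $\pi$ when $T\mathrel{R_j}\pi(j)$ for all $j\in T$ and $T\mathrel{P_i}\pi(i)$ for some $i\in T$. In dichotomous terms, $\pi'\mathrel{R_j}\pi$ means "$\pi\models\gamma_j$ implies $\pi'\models\gamma_j$", i.e.\ $\pi(j)\in\mathscr N_j^+$ implies $\pi'(j)\in\mathscr N_j^+$; and $\pi'\mathrel{P_i}\pi$ means $\pi'\models\gamma_i$ and $\pi\not\models\gamma_i$ as before. Taking $\pi'=\pi[T\to\emptyset]$ and again invoking Lemma~\ref{lemma:group_deviation} to replace every "$\pi[T\to\emptyset]\models\gamma_k$" by "$\pi\models(\gamma_k)\hatsubb$", the weak-blocking condition becomes: $\pi\models\gamma_j\to(\gamma_j)\hatsubb$ for all $j\in T$, and $\pi\models\neg\gamma_i\wedge(\gamma_i)\hatsubb$ for some $i\in T$. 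Conjoining the universal part and disjoining the existential part over $T$ yields exactly
\[
	\pi\models\bigwedge_{j\in T}\big(\gamma_j\to(\gamma_j)\hatsubb\big)\wedge\bigvee_{i\in T}\big(\neg\gamma_i\wedge(\gamma_i)\hatsubb\big)\text.
\]

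There is no real obstacle here: the proposition is a corollary of Lemma~\ref{lemma:group_deviation} together with the dichotomous reading of the strict and weak preference relations. The only point requiring a word of care is that the same fixed enumeration $\vec{\imath\jmath}$ and $T$-separating vector $\vec b$ are used simultaneously for all $i\in T$ — but this is fine, since $\pi[T\to\emptyset]$ is a single partition and Lemma~\ref{lemma:group_deviation} applies to each $\gamma_i$ with that common substitution.
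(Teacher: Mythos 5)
Your proposal is correct and follows essentially the same route as the paper: part $(i)$ is the identical chain of equivalences (unfold blocking, read $P_i$ dichotomously, replace $\pi[T\to\emptyset]\models\gamma_i$ by $\pi\models(\gamma_i)\hatsubb$ via Lemma~\ref{lemma:group_deviation}, conjoin over $T$), and your part $(ii)$ just spells out the ``analogous lines'' the paper leaves implicit. Your remark that one fixed enumeration and $T$-separating vector serves all $i\in T$ is a fine, if minor, addition.
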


	\begin{proof} We give the proof for~$(i)$, as the one for~$(ii)$ runs along analogous lines.
	Consider the following equivalences, of which the third one follows immediately from Lemma~\ref{lemma:group_deviation}.
		\[
		\renewcommand{\arraystretch}{1.3}
			\begin{array}{lll}
					\multicolumn{1}{l}{\text{$T$ blocks $\pi$}}
					&	\text{iff}	&	\text{for all $i\in T$: $\pi[T\to\emptyset]\mathrel{P_i}\pi$}\\
					&	\text{iff}	&	\text{for all $i\in T$: $\pi[T\to\emptyset]\models\gamma_i$ and $\pi\not\models\gamma_i$}\\
					&	\text{iff}	&	\text{for all $i\in T$: $\pi\models(\gamma_i)\hatsubb$ and $\pi\not\models\gamma_i$}\\
					&	\text{iff}	&	\text{$\pi\models\displaystyle\bigwedge_{i\in T}\big(\neg\gamma_i\wedge(\gamma_i)\hatsubb\big)$.}\\
			\end{array}
		\]
		This concludes the proof.
	\end{proof}



	Observe that the size of $\bigwedge_{i\in T}\big(\neg\gamma_i\wedge(\gamma_i)\hatsubb\big)$  is obviously polynomial in $\sum_{i\in T}|\gamma_i|$ and, hence, a partition~$\pi$ being blocking by particular group~$T$ of players can be polynomially characterised.
	It might also be worth observing that this  characterisation is  reminiscent of that for individual rationality and, surprisingly, much more so than of the one for Nash stability.

	As a corollary of Proposition~\ref{proposition:blocking_coalition} and de Morgan laws, we  obtain the following characterisations of a partition being core stable and of a partition being strict core stable. The characterisations, however, involve a conjunctions over all groups of players and as such is not polynomial.

	\begin{corollary}
	Let	$(N,\gamma_1,\dots,\gamma_n)$ be a Boolean hedonic game and~$\pi$ be a partition. Let for each coalition~$T$, $\vec{\imath\jmath}$ be an enumeration of~$\vars_T$ and~$\vec b$ the corresponding $T$-separating Boolean vector.  Then, 
	\begin{enumerate}[label=\ensuremath{(\roman*)},leftmargin=2.5em,itemsep=.5ex]
	\item
	$\pi$ is core stable if and only if 
	\scalebox{1}[1]{$
		\displaystyle\pi\models \bigwedge_{T\subseteq N}\bigvee_{i\in T}\big((\gamma_i)_{\vec{\imath\jmath}\leftarrow\vec b}\to \gamma_i\big)
	$,}
	\item Then, $\pi$ is strict core stable if and only if 
	\[
		\pi\models \bigwedge_{T\subseteq N}\big(\bigvee_{j\in T}(\gamma_j\wedge\neg(\gamma_j)\hatsubb)\vee\bigwedge_{i\in T}((\gamma_i)\hatsubb\to\gamma_i)\big)\text.
	\]	
	\end{enumerate}
	\end{corollary}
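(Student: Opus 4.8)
The plan is to obtain the corollary as a direct logical consequence of Proposition~\ref{proposition:blocking_coalition}, using nothing beyond the definitions of (strict) core stability and classical propositional reasoning. Recall that $\pi$ is core stable iff no group $T$ blocks $\pi$, and strict core stable iff no group $T$ weakly blocks $\pi$; in both cases $T$ ranges over the nonempty coalitions, which is what the outer conjunction $\bigwedge_{T\subseteq N}$ in the statement is understood to range over.

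For part~(i), fix a coalition $T$ together with the enumeration $\vec{\imath\jmath}$ of $\vars_T$ and the $T$-separating Boolean vector $\vec b$. By Proposition~\ref{proposition:blocking_coalition}(i), $T$ blocks $\pi$ iff $\pi\models\bigwedge_{i\in T}(\neg\gamma_i\wedge(\gamma_i)\hatsubb)$, so ``$T$ does not block $\pi$'' is equivalent to $\pi\models\neg\bigwedge_{i\in T}(\neg\gamma_i\wedge(\gamma_i)\hatsubb)$. Since $\neg\bigwedge_{i\in T}(\neg\gamma_i\wedge(\gamma_i)\hatsubb)\leftrightarrow\bigvee_{i\in T}((\gamma_i)\hatsubb\to\gamma_i)$ is a propositional tautology --- by de Morgan together with $\neg(\neg p\wedge q)\leftrightarrow(q\to p)$ --- this holds iff $\pi\models\bigvee_{i\in T}((\gamma_i)\hatsubb\to\gamma_i)$. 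Quantifying over all $T$ then gives that $\pi$ is core stable iff $\pi\models\bigwedge_{T\subseteq N}\bigvee_{i\in T}((\gamma_i)\hatsubb\to\gamma_i)$, as claimed.

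Part~(ii) is handled identically, starting from Proposition~\ref{proposition:blocking_coalition}(ii): $T$ weakly blocks $\pi$ iff $\pi\models\bigwedge_{j\in T}(\gamma_j\to(\gamma_j)\hatsubb)\wedge\bigvee_{i\in T}(\neg\gamma_i\wedge(\gamma_i)\hatsubb)$. Negating and pushing $\neg$ inward with de Morgan, the negation of the first conjunct becomes $\bigvee_{j\in T}(\gamma_j\wedge\neg(\gamma_j)\hatsubb)$ and the negation of the second becomes $\bigwedge_{i\in T}((\gamma_i)\hatsubb\to\gamma_i)$; combining them with $\vee$ and conjoining over all $T$ yields the displayed formula for strict core stability.

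I do not expect a genuine obstacle: the only care needed is bookkeeping, namely (a) keeping $\vec{\imath\jmath}$ and $\vec b$ bound to the particular $T$ inside the conjunction $\bigwedge_{T\subseteq N}$ (distinct coalitions involve enumerations of distinct variable sets $\vars_T$ and distinct separating vectors), and (b) being explicit that the trivial coalition $T=\emptyset$ is excluded, so that the inner empty disjunction does not make the outer conjunction spuriously unsatisfiable. Everything else is the mechanical application of classical propositional equivalences, each of which is a tautology and hence holds in every partition under the semantics of~$L_N$.
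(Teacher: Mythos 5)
Your proof is correct and follows exactly the route the paper intends: the paper derives the corollary from Proposition~\ref{proposition:blocking_coalition} together with de~Morgan's laws, which is precisely your negate-and-rewrite argument for both blocking and weak blocking. Your added bookkeeping remarks (binding $\vec{\imath\jmath}$ and $\vec b$ to each $T$, and excluding $T=\emptyset$) are sensible clarifications of points the paper leaves implicit, not deviations from its argument.
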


	%
	%

	Although core stable coalition structure are not guaranteed to exist in general hedonic games, the restriction to dichotomous preferences allows us to derive this positive result.

	\begin{proposition}
	For every Boolean hedonic game, a core stable coalition structure is guaranteed to exist.
	\end{proposition}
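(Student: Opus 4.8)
The plan is to build a core stable partition by a greedy \emph{extraction} procedure: repeatedly pull out a coalition that is satisfactory for all of its members, and make everyone left over a singleton. The one feature of dichotomous preferences that makes this work is that a player who is already in a satisfactory coalition can never be part of a blocking coalition, since blocking requires a \emph{strict} improvement.

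First I would, exactly as in \propref{proposition:characterisation_hedonic_aspect}, assume without loss of generality that each goal $\gamma_i$ already lies in $L_i$ (replace it by a $\TRANS$-equivalent formula of $L_i$, which has the same models among partitions), so that whether $\pi\in\mathscr N_i^+$ depends only on $\pi(i)$. Call a coalition $S$ \emph{self-sufficient} if $S\in\mathscr N_i^+$ for every $i\in S$. Then I would define $R_0=N$ and, for $t\ge 1$: if $R_{t-1}$ contains a non-empty self-sufficient coalition, choose one such $S_t\subseteq R_{t-1}$ and set $R_t=R_{t-1}\setminus S_t$; otherwise halt. Each step removes at least one player, so the procedure halts after finitely many steps, say at step $k$, with $R_k$ containing no non-empty self-sufficient coalition. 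Let $\pi$ be the partition whose cells are $S_1,\dots,S_k$, the singletons $\set i$ for $i\in R_k$, and $\emptyset$.

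It then remains to check that $\pi$ is core stable. Suppose for contradiction that some non-empty $T$ blocks $\pi$; then $T\mathrel{P_i}\pi(i)$ for every $i\in T$, which under dichotomous preferences says precisely that $T\in\mathscr N_i^+$ and $\pi(i)\notin\mathscr N_i^+$. If $i\in S_t$ for some $t$, then $\pi(i)=S_t$, and $S_t$ was self-sufficient when extracted, so $S_t\in\mathscr N_i^+$ --- contradicting $\pi(i)\notin\mathscr N_i^+$. Hence $T$ is disjoint from $S_1\cup\dots\cup S_k$, \ie $T\subseteq R_k$. But $T\in\mathscr N_i^+$ for all $i\in T$ makes $T$ a non-empty self-sufficient coalition inside $R_k$, contradicting the halting condition. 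So no blocking coalition exists. (The case $R_k=\emptyset$ needs no separate treatment: then $\pi(i)\in\mathscr N_i^+$ for every player, so no player can strictly improve, and the argument above still applies.)

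The whole argument is short, and I do not anticipate a genuine obstacle; the only point requiring care is the initial reduction to $\gamma_i\in L_i$, which is what turns ``$S$ is self-sufficient'' into a property of $S$ alone and so makes the greedy step well defined. The real content of the proposition is just the dichotomous-preference observation that satisfied players never deviate, so repeatedly freezing a satisfactory coalition can never create a blocking group.
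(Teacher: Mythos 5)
Your proof is correct and follows essentially the same route as the paper's: iteratively extract a coalition that satisfies all of its members' goals, make the leftover players singletons, and observe that satisfied players can never join a blocking coalition while the residual players contain no mutually satisfactory coalition by the halting condition. The only cosmetic differences are that you spell out the reduction to goals in $L_i$ and drop the paper's (unnecessary) insistence on choosing a \emph{maximal} satisfactory subset at each step.
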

	\begin{proof}

		We initialise $N'$ to $N$ and partition $\pi$ to $\{\emptyset\}$. We find a maximal subset of $S\subset N'$ for which all players are in an approved coalition that satisfies their formulas. We modify $\pi$ to $\pi\cup \{S\}$ and $N'$ to $N'\setminus S$. The procedure is repeated until no such maximal subset $S$ exists. If $N'\neq \emptyset$, then $\pi$ is set to $\pi\cup \{\{i\}\midd i\in N'\}$. 
	
		We now argue that $\pi$ is core stable.
	We note that each player who was in some subset $S$ will never be part of a blocking coalition. If $N'$ was non-empty in the last iteration, then no subset of players in $N'$ can form a deviating coalition among themselves.
	\end{proof}
	
By contrast, a strict core stable partition is not guaranteed to exist. To see this consider the three-player Boolean hedonic game $(\set{1,2,3},12,21\vee 23,32)$. It is not hard to see that each of the five possible partitions is weakly blocked by either $\set{1,2}$ or $\set{2,3}$.


	\paragraph{Envy-freeness}

	Recall that a partition is envy-free if no player would strictly prefer to exchange places with another player. Observe that for the trivial partitions 
$\pi^0=[1\sep\cdots\sep n]$ and	
$\pi^1=[1,\dots,n]$,	
	 we have $\pi^0[i\leftrightarrows j]=\pi^0$ and $\pi^1[i\leftrightarrows j]=\pi^1$ for all players~$i$ and~$j$. Accordingly~$\pi^0$ and~$\pi^1$ are envy-free. Envy-free partitions are thus guaranteed to exist in our setting.  The following lemma allows us to derive a polynomial characterisation of envy-freeness.

	\bphnote{What follows implies a \textbf{polynomial} characterisation of envy-freeness. It is important to observe that the characterisation depends on the hedonic aspect, that is, we are making essential use of the fact that each player~$i$'s goal is a formulas of~$L_i$ rather than~$L_N$.}

	\bphnote{How to get rid of the bloody case distinctions in the following proof?!}

	\newcommand{\swapsub}{_{i\vec k,j\vec k\leftarrow j\vec k,i\vec k}}

	\begin{lemma}\label{lemma:swapsub}
		Let $(N,\gamma_1,\dots,\gamma_n)$ be a Boolean hedonic game and~$i$ and~$j$ players in~$N$, and $\varphi$ a formula in~$L_N$. Fix, furthermore, an enumeration
		$k_1,\dots,k_{n-2}$ of $N\setminus\set{i,j}$ and let ${i\vec k}=ik_1,\dots,ik_{n-2}$ and $j\vec k=jk_1,\dots,jk_{n-2}$ enumerate $\vars_i\setminus\set{ij}$ and $\vars_j\setminus\set{ji}$, respectively. Then,
		\[
			\text{
				$\pi\models\varphi\swapsub$
			\wordbox{ if and only if }{ if and only if }
				$\pi[i\leftrightarrows j]\models\varphi$.
			}
		\]
	\end{lemma}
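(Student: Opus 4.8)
The plan is to recognise the double substitution $\cdot\swapsub$ as the action on formulas of the transposition $\sigma=(i\,j)$ of the player set, and the place-exchange $\pi[i\leftrightarrows j]$ as the matching action on partitions, so that the lemma becomes an instance of the usual semantic substitution principle. Concretely, let $\sigma\colon N\to N$ be the bijection with $\sigma(i)=j$, $\sigma(j)=i$, and $\sigma(k)=k$ for all $k\in N\setminus\set{i,j}$. I extend $\sigma$ to variables by $\sigma(kl)=\sigma(k)\sigma(l)$ (well defined since $kl$ denotes the unordered pair $\set{k,l}$), and then homomorphically to formulas of $L_N$, so that $\sigma(\neg\varphi)=\neg\sigma(\varphi)$ and $\sigma(\varphi\vee\psi)=\sigma(\varphi)\vee\sigma(\psi)$.

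The first step is to observe that $\varphi\swapsub=\sigma(\varphi)$ for every $\varphi\in L_N$. On variables this is immediate: the substitution sends each $ik_m$ to $jk_m=\sigma(ik_m)$ and each $jk_m$ to $ik_m=\sigma(jk_m)$, and it fixes $ij$ as well as every variable $k_mk_{m'}$ with $k_m,k_{m'}\notin\set{i,j}$, which is exactly what $\sigma$ does. Since both $\cdot\swapsub$ and $\sigma(\cdot)$ distribute over $\neg$ and $\vee$, a one-line structural induction extends the identity to all formulas.

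The second step is the geometric counterpart: $\pi[i\leftrightarrows j]=\set{\sigma(S)\midd S\in\pi}$, where $\sigma(S)=\set{\sigma(x)\midd x\in S}$. This I would get by unwinding the definition of $\pi[i\leftrightarrows j]$, splitting on whether $\pi(i)=\pi(j)$. If $\pi(i)\neq\pi(j)$ then $\sigma(\pi(i))=(\pi(i)\setminus\set i)\cup\set j$, $\sigma(\pi(j))=(\pi(j)\setminus\set j)\cup\set i$, and every other block is fixed by $\sigma$, which matches the definition verbatim; if $\pi(i)=\pi(j)=:S$ then $\sigma(S)=S$ and $(S\setminus\set i)\cup\set j=(S\setminus\set j)\cup\set i=S$, so both sides collapse to $\pi$. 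From this it follows that for any player $k$ the block of $k$ in $\pi[i\leftrightarrows j]$ is $\sigma\big(\pi(\sigma(k))\big)$, and hence, since $\sigma$ is a bijection, for all players $k,l$ we get the base equivalence
\[
\pi[i\leftrightarrows j]\models kl \iff \pi(\sigma(k))=\pi(\sigma(l)) \iff \pi\models\sigma(kl).
\]

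Finally I would close the argument by structural induction on $\varphi$, proving $\pi[i\leftrightarrows j]\models\varphi$ iff $\pi\models\sigma(\varphi)$: the base case $\varphi=kl$ is the displayed equivalence, and the cases $\varphi=\neg\psi$ and $\varphi=\psi_1\vee\psi_2$ are immediate from the semantic clauses together with $\sigma(\neg\psi)=\neg\sigma(\psi)$ and $\sigma(\psi_1\vee\psi_2)=\sigma(\psi_1)\vee\sigma(\psi_2)$. Combining this with $\varphi\swapsub=\sigma(\varphi)$ gives $\pi\models\varphi\swapsub$ iff $\pi[i\leftrightarrows j]\models\varphi$, as required. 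The only genuinely delicate point — the author's ``bloody case distinctions'' — is the verification that $\pi[i\leftrightarrows j]=\set{\sigma(S)\midd S\in\pi}$, and within it the degenerate case $\pi(i)=\pi(j)$ where the two ``new'' blocks in the definition of $\pi[i\leftrightarrows j]$ merge; routing everything through the permutation $\sigma$ confines that case analysis to this single auxiliary identity and leaves the induction on $\varphi$ entirely routine.
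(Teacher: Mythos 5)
Your proposal is correct in substance but organises the argument differently from the paper. The paper proves the lemma by a direct structural induction on $\varphi$ whose base case is handled by an explicit case analysis on the atom $lm$ (whether $lm=ij$, $lm\in(V_i\cup V_j)\setminus\{ij\}$, or $lm\notin V_i\cup V_j$), with further sub-cases on whether $\pi(i)=\pi(j)$ and on the membership of the third player $k$ in $\pi(i)$ or $\pi(j)$ --- exactly the ``case distinctions'' the authors grumble about. You instead factor everything through the transposition $\sigma=(i\,j)$: you identify the simultaneous substitution with the action of $\sigma$ on formulas, show $\pi[i\leftrightarrows j]=\{\sigma(S)\mid S\in\pi\}$, derive the block formula $\pi[i\leftrightarrows j](k)=\sigma(\pi(\sigma(k)))$, and then the induction over $\varphi$ becomes a one-line equivariance argument. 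This buys a cleaner and more reusable statement (the same renaming lemma works for any permutation of players, not just a transposition), at the cost of one auxiliary identity; the paper's version is more pedestrian but stays entirely elementary. One micro-slip in your degenerate case: when $i,j\in S$ you assert $(S\setminus\{i\})\cup\{j\}=S$, which is false --- that set is $S\setminus\{i\}$; taken literally, the paper's displayed definition of $\pi[i\rightleftarrows j]$ does not even return a partition when $\pi(i)=\pi(j)$. Your conclusion that both sides collapse to $\pi$ is nevertheless the intended reading (and is exactly what the paper's own proof asserts as ``obvious'' in that case), so the justification should simply appeal to the intended meaning of exchanging places, or note that $\sigma$ fixes $S$ setwise, rather than to the literal formula.
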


	\begin{proof}
		With~$i\vec k$ and~$j\vec k$ being fixed we write~$\varphi'$ for~$\varphi\swapsub$. 		The proof is then by induction on~$\varphi$. 
		{
		\renewcommand{\swapsub}{'}
	
		For the basis, let $\varphi=lm$. There are three possibilities:
		\[
		\scalebox{1}[1]{
\begin{enumerate*}[label=$(\alph{*})$,leftmargin=2.25em]
			\item $lm=ij$,						
			\item $lm\in (V_i\cup V_j)\setminus\set{ij}$, and
			\item $lm\notin V_i\cup V_j$.
		\end{enumerate*}}
		\]
		If~$(a)$, we have that $lm\swapsub=ij\swapsub=ij=lm$.
		Now, either $\pi(i)=\pi(j)$ or $\pi(i)\neq\pi(j)$. If the former, $\pi[i\leftrightarrows j]=\pi$ as well as both $\pi\models ij\swapsub$ and $\pi[i\leftrightarrows j]\models ij$. If the latter, however, it can easily be seen that both $\pi\not\models ij\swapsub$ and $\pi[i\leftrightarrows j]\not\models ij$.

For case~$(b)$, we may assume without loss of generality that $lm=ik$ for some $k\neq j$. Then, $ik\swapsub = jk$. In case $\pi(i)=\pi(j)$, obviously, $\pi=\pi[i\leftrightarrows j]$ as well as $k\in \pi(i)$ if and only if $k\in\pi(j)$. Hence, $\pi\models ik\swapsub$ if and only if $\pi[i
	\leftrightarrows j]\models ik$. So, assume $\pi(i)\neq \pi(j)$. Now, either 
	\begin{enumerate*}[label=$(\roman{*})$,leftmargin=2.25em]
	\item $k\in\pi(i)$ and $k\notin\pi(j)$, 
	\item $k\notin\pi(k)$ and $k\in\pi(j)$, or 
	\item $k\notin\pi(i)$ and $k\notin\pi(j)$.
	\end{enumerate*} 
	If~$(i)$, $\pi\models ik\swapsub$ as well as $\pi[i\leftrightarrows j]\models jk$. In cases~$(ii)$ and~$(iii)$, we have $\pi\not\models ik\swapsub$ and $\pi[i\leftrightarrows j]\not\models jk$.
	
	Finally, if~$(c)$, we have $lm\swapsub=lm$. As $l,m\notin\set{i,j}$, it can then easily be seen that $\pi\models lm\swapsub$ if and only if $\pi[i\leftrightarrows j]\models lm$.

The cases $\varphi=\neg\psi$ and $\varphi=\psi\to\chi$ follow by induction.
%
		}
	\end{proof}
We are now in a position to state the following result.

	\begin{proposition}\label{proposition:envyfree}
		Let $(N,\gamma_1,\dots,\gamma_n)$ be a Boolean hedonic game. Furthermore, for every two players,~$i$ and~$j$, and enumeration
		$k_1,\dots,k_{n-2}$ of $N\setminus\set{i,j}$, let ${i\vec k}=ik_1,\dots,ik_{n-2}$ and $j\vec k=jk_1,\dots,jk_{n-2}$ enumerate $\vars_i\setminus\set{ij}$ and $\vars_j\setminus\set{ij}$, respectively.
	Then,
		\[
			\scalebox{1}[1]{
				$\pi$ is envy-free \wordbox{if and only if}{ if and only if } $\displaystyle\pi\models\bigwedge_{i,j\in N}\big((\gamma_i)\swapsub\to\gamma_i\big)$.
			}
		\]
	\end{proposition}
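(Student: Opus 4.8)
The plan is to argue one player-pair at a time and let Lemma~\ref{lemma:swapsub} carry the load. First I would unfold the definition of envy-freeness: $\pi$ is envy-free exactly when $\pi\mathrel{R_i}\pi[i\leftrightarrows j]$ holds for all players $i$ and $j$. Since the conjunction in the claimed formula also ranges over all pairs $i,j\in N$ (and the case $i=j$ is vacuous on both sides, because $\pi[i\leftrightarrows i]=\pi$ and $(\gamma_i)\swapsub$ is then $\gamma_i$), it suffices to prove, for each fixed pair $i,j$ together with the enumerations $i\vec k,j\vec k$ named in the statement, the single equivalence: $\pi\mathrel{R_i}\pi[i\leftrightarrows j]$ iff $\pi\models(\gamma_i)\swapsub\to\gamma_i$; conjoining over all $i,j$ then gives the proposition. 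I would present the whole thing as a short labelled chain of ``iff'' steps, in the style of the proofs of Propositions~\ref{proposition:individual_rationality} and~\ref{proposition:blocking_coalition}.

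The core of that chain has three links. First, by the semantics of the hedonic game represented by $(N,\gamma_1,\dots,\gamma_n)$ together with the lifting of coalition-preferences to partition-preferences, $\pi\mathrel{R_i}\pi'$ holds iff satisfaction of $\gamma_i$ at $\pi'$ forces satisfaction of $\gamma_i$ at $\pi$; specialising $\pi'=\pi[i\leftrightarrows j]$, the condition reads ``$\pi[i\leftrightarrows j]\models\gamma_i$ implies $\pi\models\gamma_i$''. Second, I apply Lemma~\ref{lemma:swapsub} with $\varphi:=\gamma_i$ and exactly those enumerations to rewrite the antecedent: $\pi[i\leftrightarrows j]\models\gamma_i$ iff $\pi\models(\gamma_i)\swapsub$. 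Third, ``$\pi\models(\gamma_i)\swapsub$ implies $\pi\models\gamma_i$'' is by definition $\pi\models(\gamma_i)\swapsub\to\gamma_i$. This establishes the per-pair equivalence, and taking $\bigwedge_{i,j\in N}$ yields that $\pi$ is envy-free iff $\pi\models\bigwedge_{i,j\in N}\big((\gamma_i)\swapsub\to\gamma_i\big)$.

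I do not expect a genuine obstacle here: Lemma~\ref{lemma:swapsub} has already absorbed the delicate bookkeeping — in particular that the variable $ij$ and the variables disjoint from $\{i,j\}$ are left untouched, which mirrors the fact that $\pi[i\leftrightarrows j]$ only relabels $i$ and $j$ within $\pi(i)$ and $\pi(j)$ and leaves every other coalition intact. The one point I would state carefully is the dependence on the hedonic assumption: the first link uses that $\gamma_i$ is ($\TRANS$-equivalent to) a formula of $L_i$, so that its truth at a partition is a property of that player's own coalition alone (Proposition~\ref{proposition:characterisation_hedonic_aspect}); without this, the reduction of partition-preference to coalition-preference — and hence the very notion of envy-freeness — would not be available. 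Finally I would remark, as the text promises, that $(\gamma_i)\swapsub$ is obtained from $\gamma_i$ by a variable-for-variable substitution, so $|(\gamma_i)\swapsub|=|\gamma_i|$ and the characterising formula has size $O(n^2\cdot\max_i|\gamma_i|)$ — a genuinely polynomial characterisation, in contrast with the Nash-stability case where $\hat\exists i\,\gamma_i$ blows up exponentially.
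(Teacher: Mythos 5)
Your proposal is correct and follows essentially the same route as the paper's proof: unfold envy-freeness into $\pi\mathrel{R_i}\pi[i\leftrightarrows j]$ for all pairs, translate this via the goals into ``$\pi[i\leftrightarrows j]\models\gamma_i$ implies $\pi\models\gamma_i$'', apply Lemma~\ref{lemma:swapsub} to replace the antecedent by $\pi\models(\gamma_i)\swapsub$, and conjoin over all $i,j$. Your added remarks on the $i=j$ case, the role of the hedonic assumption, and the polynomial size of the characterising formula are consistent with the surrounding text but not needed for the argument itself.
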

	
	\begin{proof}
	By virtue of Lemma~\ref{lemma:swapsub}, the following equivalences hold:
		\[
		\renewcommand{\arraystretch}{1.3}
			\begin{array}{ll}
					\multicolumn{2}{l}{\text{$\pi$ is envy-free}}\\
					\text{iff}	&	\text{for all $i,j\in N$: $\pi\mathrel{R_i}\pi[i\leftrightarrows j]$}\\
					\text{iff}	&	\text{for all $i,j\in N$: $\pi[i\leftrightarrows j]\models\gamma_i$ implies $\pi\models\gamma_i$}\\
					\text{iff}	&	\text{for all $i,j\in N$: $\pi\models(\gamma_i)\swapsub$ implies $\pi\models\gamma_i$}\\
					\text{iff}	&	\text{for all $i,j\in N$: $\pi\models(\gamma_i)\swapsub\to\gamma_i$}\\
					\text{iff}	&	\text{$\pi\models\displaystyle\bigwedge_{i,j\in N}\big((\gamma_i)\swapsub\to\gamma_i\big)$}
			\end{array}
		\]
		This concludes the proof.
	\end{proof}
	Observe that the size of $\bigwedge_{i,j\in N}\big((\gamma_i)\swapsub\to\gamma_i\big)$  is clearly polynomial in $\sum_{i\in T}|\gamma_i|$. Hence, a partition~$\pi$ being envy-free can be polynomially characterised.

	{
	\setcounter{example}{0}
	\begin{example}[continued]
	Recall that $\gamma_3=(31\vee 32)\wedge\neg 34$ and that player~$3$ envies player~$4$ if partition~$\pi'=[1|24|3]$ obtains. 
	To see how this is reflected by Proposition~\ref{proposition:envyfree}, let~$31,32$ and~$41,42$ enumerate $V_3\setminus\set{34}$ and $V_4\setminus\set{43}$, respectively. Then,
	\[
			((31\vee 32)\wedge\neg 34)_{31,32,41,42\leftarrow 41,42,31,32}
			=
			(41\vee 42)\wedge\neg 34\text.
	\]
	Now, both $\pi'\models (41\vee 42)\wedge\neg 34$ and $\pi'\not\models(31\vee 32)\wedge\neg 34$, and, hence,
$
		\pi'\not\models(\gamma_3)_{34,31,32\leftarrow43,41,42}\to\gamma_3
$.
		\end{example}
		}


	\mjwnote{
	\section{The Scope of Boolean Hedonic Games}

	\subsection{Boolean Hedonic Games Unleashed}

	Some informal remarks concerning
	\begin{itemize}
		\item the possibility of extending the framework with general preferences---there are several possibilities,
		\item the possibility of extending the framework to general coalition formation games,
		\item the completeness of Boolean hedonic games, in the sense that every dichotomous hedonic game can be represented by a Boolean hedonic game.
	\end{itemize}

	\subsection{Applications}

	\bphnote{J\'er\^ome has some ideas for this section.}

	\begin{itemize}
		\item Kidney exchange
		\item Marriage markets
	\end{itemize}
	}

	\section{Related Work and Conclusions}

	Our motivation and approach is strongly reminiscent of the setting of
	Boolean games in the context of non-cooperative game
	theory~\citep{HHMW01a}.  A major difference with Boolean games and
	propositional hedonic games is that in Boolean games, players have
	preferences over outcomes, where an outcome is a truth assignment to
	outcome variables, and each outcome variable is controlled by a
	specific player.  This control assignment function, which is a central
	notion in Boolean games, has no counterpart here, where the outcome is
	a partition of the players. However, there are technical
	similarities with and conceptual connections to Boolean games, especially when characterising solution
	concepts. For instance, the characterisation of Nash stable partitions
	by propositional formulas (Section 4) is similar to the
	characterisation of Nash equilibria by propositional formulas in
	Boolean games as by \citet{bonzon:2009a}.  The basic Boolean games model
	of~\citet{HHMW01a} was adapted to the setting of cooperative games
	by \citet{dunne:2008a}. However, the logic used to specific player's
	goals in the work of~\citeauthor{dunne:2008a} was not intended for
	specifying desirable coalition structures, as we have done in the
	present paper.  
	
	Our work also shares some common ground with the work
	of \citet{bonzon2012effectivity}, who study the formation of efficient
	coalitions in Boolean games, that is, coalitions whose joint abilities
	allow their members to jointly achieve their goals. Our work also
	bears some resemblance to the work of \citet{ElWo09a}, who were
	interested in using logic as a foundation upon which to build a
	compact representation scheme for hedonic games; more precisely, their
	work made use of weighted Boolean formulas, and was inspired by the
	\emph{marginal contribution nets} representation for cooperative games
	in characteristic function form 
	proposed by\citet{ieong:2005a}. The focus of~\citet{ElWo09a}, however, was more
	on complexity issues than in finding exact characterisations for
	solution concepts.

	Finally, our work 
	contributes to the extensive literature on compact
	representations for cooperative games, which has expanded rapidly over
	the past decade~\citep{chalkiadakis:2011a}.

	Our characterisations of solution concepts enable to compute, using an off-the-shelf
	SAT solver, a partition or all partitions satisfying a solution concept or a logical combination of solution concepts.
	Of course, this translation is interesting only when we cannot do better. For instance, for solution concepts
	leading to a polynomial characterisation, we cannot do better if and only if the corresponding decision problem
	is {\sf NP}-complete. Identifying the complexity of finding partitions satisfying solution concepts for Boolean hedonic games
	is therefore the most immediate direction of further research.

\balance
	There are at least three more directions in which our work might be further developed.  
	First, we could think of relaxing our restriction to
	dichotomous preferences and study more general hedonic games with
	compact logical representations and derive exact
	characterisations of solution concepts. There are several ways in which more general preferences can be incorporated in our logical framework for hedonic games. For instance, instead of a single goal, we could associate with each player a prioritised set of goals. The different possibilities in this respect, however, vary in their level of sophistication. For some of the cruder extensions our results extend naturally and straightforwardly. For the more sophisticated settings more research seems to be required, which falls beyond the scope of this paper.  
	
	Second, our restriction to
	{\em hedonic} preferences can also be relaxed, so that players may
	have preferences that do depend not only on on the coalition to which
	they belong. This would also pave the way to a more general {\em logic
	  of coalition structures}. Solution concepts, once generalised, can
	hopefully be characterised. (We have positive preliminary results that
	go into this direction). 
	
	A third topic of future research would be the characterisation of classes of hedonic and coalition formation games in our logic. As mentioned above, various classes of hedonic games that allow for a concise representation have been proposed in the literature. It would be interesting to see whether these classes can also be polynomially characterised in our logic.



	\paragraph{Acknowledgments}
The authors would like to thank the anonymous referees of LOFT 2014 for their constructive comments.
Haris Aziz has been supported by NICTA which is funded by the Australian
Government as represented by the Department of Broadband,
Communications and the Digital Economy and the Australian Research
Council through the ICT Centre of Excellence program. 
J{\'e}r{\^o}me Lang has been supported by the ANR project CoCoRICo-CoDec.
Paul Harrenstein and Michael Wooldridge have been supported by the ERC under Advanced Grant 291528 (``RACE'').


\begin{thebibliography}{19}
\providecommand{\natexlab}[1]{#1}
\providecommand{\url}[1]{\texttt{#1}}
\expandafter\ifx\csname urlstyle\endcsname\relax
  \providecommand{\doi}[1]{doi: #1}\else
  \providecommand{\doi}{doi: \begingroup \urlstyle{rm}\Url}\fi

\bibitem[Aziz et~al.(2013)Aziz, Brandt, and Harrenstein]{ABH11c}
H.~Aziz, F.~Brandt, and P.~Harrenstein.
\newblock Pareto optimality in coalition formation.
\newblock \emph{Games and Economic Behavior}, 82:\penalty0 562--581, 2013.

\bibitem[Bogomolnaia and Jackson(2002)]{bogomolnaia:2002}
A.~Bogomolnaia and M.~O. Jackson.
\newblock The stability of hedonic coalition structures.
\newblock \emph{Games and Economic Behaviour}, 38:\penalty0 201--230, 2002.

\bibitem[Bogomolnaia and Moulin(2004)]{BoMo04a}
A.~Bogomolnaia and H.~Moulin.
\newblock Random matching under dichotomous preferences.
\newblock \emph{Econometrica}, 72\penalty0 (1):\penalty0 257--279, 2004.

\bibitem[Bogomolnaia et~al.(2005)Bogomolnaia, Moulin, and Stong]{BMS05a}
A.~Bogomolnaia, H.~Moulin, and R.~Stong.
\newblock Collective choice under dichotomous preferences.
\newblock \emph{Journal of Economic Theory}, 122\penalty0 (2):\penalty0
  165--184, 2005.

\bibitem[Bonzon et~al.(2009)Bonzon, Lagasquie-Schiex, Lang, and
  Zanuttini]{bonzon:2009a}
E.~Bonzon, M.-C. Lagasquie-Schiex, J.~Lang, and B.~Zanuttini.
\newblock Compact preference representation and {B}oolean games.
\newblock \emph{Autonomous Agents and Multi-Agent Systems}, 18\penalty0
  (1):\penalty0 1--35, 2009.

\bibitem[Bonzon et~al.(2012)Bonzon, Lagasquie-Schiex, and
  Lang]{bonzon2012effectivity}
E.~Bonzon, M.-C. Lagasquie-Schiex, and J.~Lang.
\newblock Effectivity functions and efficient coalitions in {Boolean games}.
\newblock \emph{Synthese}, 187\penalty0 (1):\penalty0 73--103, 2012.

\bibitem[Bouveret and Lang(2008)]{Bouveret08Jair}
S.~Bouveret and J.~Lang.
\newblock Efficiency and envy-freeness in fair division of indivisible goods:
  Logical representation and complexity.
\newblock \emph{Journal of AI Research}, 32:\penalty0 525--564, 2008.

\bibitem[Brams and Fishburn(2007)]{BrFi07c}
S.~J. Brams and P.~C. Fishburn.
\newblock \emph{Approval Voting}.
\newblock Springer, 2007.

\bibitem[Cechl{\'a}rov{\'a}(2008)]{Cech08a}
K.~Cechl{\'a}rov{\'a}.
\newblock Stable partition problem.
\newblock In \emph{Encyclopedia of Algorithms}, pages 885--888. Springer, 2008.

\bibitem[Cechl{\'a}rov{\'a} and Hajdukov{\'a}(2004)]{CeHa04a}
K.~Cechl{\'a}rov{\'a} and J.~Hajdukov{\'a}.
\newblock Stable partitions with {$\mathcal{W}$}-preferences.
\newblock \emph{Discrete Applied Mathematics}, 138\penalty0 (3):\penalty0
  333--347, 2004.

\bibitem[Chalkiadakis et~al.(2011)Chalkiadakis, Elkind, and
  Wooldridge]{chalkiadakis:2011a}
G.~Chalkiadakis, E.~Elkind, and M.~Wooldridge.
\newblock \emph{Computational Aspects of Cooperative Game Theory}.
\newblock Morgan-Claypool, 2011.

\bibitem[Dr{\`e}ze and Greenberg(1980)]{dreze:80a}
J.~H. Dr{\`e}ze and J.~Greenberg.
\newblock Hedonic coalitions: Optimality and stability.
\newblock \emph{Econometrica}, 48\penalty0 (4):\penalty0 987--1003, 1980.

\bibitem[Dunne et~al.(2008)Dunne, Kraus, {van der Hoek}, and
  Wooldridge]{dunne:2008a}
P.~E. Dunne, S.~Kraus, W.~{van der Hoek}, and M.~Wooldridge.
\newblock Cooperative {B}oolean games.
\newblock In \emph{Proceedings of the Seventh International Joint Conference on
  Autonomous Agents and Multiagent Systems (AAMAS-2008)}, pages 1015--1022,
  2008.

\bibitem[Elkind and Wooldridge(2009)]{ElWo09a}
E.~Elkind and M.~Wooldridge.
\newblock Hedonic coalition nets.
\newblock In \emph{Proceedings of the Eigth International Joint Conference on
  Autonomous Agents and Multiagent Systems (AAMAS-2009)}, pages 417--424, 2009.

\bibitem[Hajdukov{\'a}(2006)]{hajdukova:2006}
J.~Hajdukov{\'a}.
\newblock Coalition formation games: {A} survey.
\newblock \emph{International Game Theory Review}, 8\penalty0 (4):\penalty0
  613--641, 2006.

\bibitem[Harrenstein et~al.(2001)Harrenstein, {van der Hoek}, Meyer, and
  Witteveen]{HHMW01a}
P.~Harrenstein, W.~{van der Hoek}, J.-J. Meyer, and C.~Witteveen.
\newblock Boolean games.
\newblock In J.~{van Benthem}, editor, \emph{Proceedings of the 8th Conference
  on Theoretical Aspects of Rationality and Knowledge (TARK)}, pages 287--298,
  2001.

\bibitem[Ieong and Shoham(2005)]{ieong:2005a}
S.~Ieong and Y.~Shoham.
\newblock Marginal contribution nets: A compact representation scheme for
  coalitional games.
\newblock In \emph{Proceedings of the Sixth ACM Conference on Electronic
  Commerce (EC'05)}, Vancouver, Canada, 2005.

\bibitem[Konieczny and Pino-P\'{e}rez(2002)]{KP02}
S.~Konieczny and R.~Pino-P\'{e}rez.
\newblock Merging information under constraints: a logical framework.
\newblock \emph{Journal of Logic and Computation}, 12\penalty0 (5):\penalty0
  773--808, 2002.

\bibitem[Lin and Reiter(1994)]{lin1994forget}
F.~Lin and R.~Reiter.
\newblock Forget it!
\newblock In \emph{Working Notes of AAAI Fall Symposium on Relevance}, pages
  154--159, 1994.

\end{thebibliography}
\end{document}